\renewcommand\footnotetextcopyrightpermission[1]{} 
\newcommand{\X}{\mathcal{X}}
\newcommand{\Y}{\mathcal{Y}}
\newcommand{\bpi}{\boldsymbol{\pi}}
\DeclareMathOperator*{\argmax}{argmax}
\DeclareMathOperator*{\argmin}{argmin}
\title[Learning PAC Maxmin Strategies in SBG with Infinite Strategy Spaces]{Learning Probably Approximately Correct Maximin Strategies in Simulation-Based Games with Infinite Strategy Spaces}
\author{Alberto Marchesi}
\affiliation{%
  \institution{Politecnico di Milano}
  \streetaddress{P.zza L. da Vinci}
  \city{Milano} 
  \state{Italia} 
  \postcode{20133}
}
\email{alberto.marchesi@polimi.it}
\author{Francesco Trov\`o}
\affiliation{%
	\institution{Politecnico di Milano}
	\streetaddress{P.zza L. da Vinci}
	\city{Milano} 
	\state{Italia} 
	\postcode{20133}
}
\email{francesco1.trovo@polimi.it}
\author{Nicola Gatti}
\affiliation{%
	\institution{Politecnico di Milano}
	\streetaddress{P.zza L. da Vinci}
	\city{Milano} 
	\state{Italia} 
	\postcode{20133}
}
\email{nicola.gatti@polimi.it}
\begin{document}

\begin{abstract}
	We tackle the problem of learning equilibria in \emph{simulation-based} games.
	In such games, the players' utility functions cannot be described analytically, as they are given through a black-box simulator that can be queried to obtain noisy estimates of the utilities.
	This is the case in many real-world games in which a complete description of the elements involved is not available upfront, such as complex military settings and online auctions.
	In these situations, one usually needs to run costly simulation processes to get an accurate estimate of the game outcome.
	As a result, solving these games begets the challenge of designing learning algorithms that can find (approximate) equilibria with high confidence, using as few simulator queries as possible.
	Moreover, since running the simulator during the game is unfeasible, the algorithms must first perform a pure exploration learning phase and, then, use the (approximate) equilibrium learned this way to play the game.   
	In this work, we focus on two-player zero-sum games with \emph{infinite strategy spaces}.
	Drawing from the best arm identification literature, we design two algorithms with theoretical guarantees to learn \emph{maximin} strategies in these games.
	The first one works in the \emph{fixed-confidence} setting, guaranteeing the desired confidence level while minimizing the number of queries.
	Instead, the second algorithm fits the \emph{fixed-budget} setting, maximizing the confidence without exceeding the given maximum number of queries.
	First, we formally prove $\delta$-PAC theoretical guarantees for our algorithms under some regularity assumptions, which are encoded by letting the utility functions be drawn from a Gaussian process.
	Then, we experimentally evaluate our techniques on a testbed made of randomly generated games and instances representing simple real-world security settings.
\end{abstract}

\keywords{Simulation-based games, Equilibrium computation}

\maketitle

\section{Introduction}\label{sec:intro}

Over the last two decades, game-theoretic models have received a growing interest from the AI community, as they allow to design artificial agents endowed with the ability of reasoning strategically in complex multi-agent settings.
This surge of interest was driven by many successful applications of game theory to challenging real-world problems, such as building robust protection strategies in security domains~\cite{tambe2011security}, designing truthful auctions for web advertising~\cite{gatti2015truthful}, and solving (\emph{i.e.}, finding the equilibria of) large zero-sum recreational games, \emph{e.g.}, Go~\cite{silver2016mastering}, different variants of Poker~\cite{brown2018superhuman,brown2019superhuman}, and Bridge~\cite{rong2019competitive}.  

Most of the game-theoretic studies in AI focus on models where a complete description of the game is available, \emph{i.e.}, the players' utilities can be expressed analytically.
This is the case of recreational games, which are commonly used as benchmarks for evaluating algorithms to compute equilibria in games~\cite{brown2017safe}.
However, in many real-world problems, the players' utilities may \emph{not} be readily available, as they are the outcome of a complex process governed by unknown parameters.
This is the case, \emph{e.g.}, in complex military settings where a comprehensive description of the environment and the units involved is not available, and online auctions in which the platform owner does not have complete knowledge of the parties involved. 
These scenarios can be addressed with \emph{simulation-based games} (SBGs)~\cite{vorobeychik2009strategic}, where the players' utilities are expressed by means of a black-box simulator that, given some players' strategies, can be queried to obtain a noisy estimate of the utilities obtained when playing such strategies.
These models beget new challenges in the design of algorithms to solve games: \emph{(i)} they have to learn (approximate) equilibria by using only noisy observations of the utilities, and \emph{(ii)} they should use as few queries as possible, since running the simulator is usually a costly operation.
Additionally, using the simulator while playing the game is unfeasible, since the simulation process might be prohibitively time consuming, as it is the case, \emph{e.g.}, in military settings where the units have to take prompt decisions when on the battlefield.
Thus, the algorithms must first perform a pure exploration learning phase and, then, use the (approximate) equilibrium learned this way to play the game. 

Despite the modeling power of SBGs, recent works studying such games are only sporadic, addressing specific settings such as, \emph{e.g.}, symmetric games with a large number of players~\cite{wiedenbeck2018regression,sokota2019learning}, empirical mechanism design~\cite{viqueira2019empirical}, and two-player zero-sum finite games~\cite{garivier2016maximin} (see Section~\ref{sec:related_works} for more details and additional related works).
To the best of our knowledge, the majority of these works focus on the case in which each player has a finite number of strategies available.
However, in most of the game settings in which simulations are involved, the players have an infinite number of choices available, \emph{e.g.}, physical quantities, such as angle of movement and velocity of units on a military field, bids in auctions, and trajectories in robot planning.
Dealing with infinite strategies leads to further challenges in the design of learning algorithms, since, being a complete exploration of the strategy space unfeasible, providing strong theoretical guarantees is, in general, a non-trivial task.
%


\subsection{Original Contributions}
We study the problem of learning equilibria in \emph{two-player zero-sum} SBGs with \emph{infinite strategy spaces}, providing theoretical guarantees.
Specifically, we focus on \emph{maximin} strategies for the first player, \emph{i.e.}, those maximizing her utility under the assumption that the second player acts so as to minimize it, after observing the first player's course of play.
For instance, this is the case in security games where a terrestrial counter-air defensive unit has to shoot an heat-seeking missile to a moving target that represents an approaching enemy airplane, which, after the attack has started, can respond to it by deploying an obfuscating flare with the intent of deflecting the missile trajectory.
When dealing with infinite strategy spaces, some regularity assumptions on the players' utilities are in order, since, otherwise, one cannot design learning algorithms with provable theoretical guarantees.
In this work, we encode our regularity assumptions on the utility function by modeling it as a sample from a \emph{Gaussian process} (GP)~\cite{williams2006gaussian}.
We design two algorithms able to learn (approximate) maximin strategies in two-player zero-sum SBGs with infinite strategy spaces, drawing from techniques used in the best arm identification literature.
The first algorithm we propose, called M-GP-LUCB, is for the \emph{fixed-confidence} setting, where the objective is to find an (approximate) maximin strategy with a given (high) confidence, using as few simulator queries as possible.
Instead, the second algorithm, called SE-GP, is for the \emph{fixed-budget} setting, in which a maximum number of queries is given in advance, and the task is to return an (approximate) maximin strategy with confidence as high as possible.  
First, we prove $\delta$-PAC (\emph{i.e.}, \emph{probably approximately correct}) theoretical guarantees for our algorithms in the easiest setting in which the strategy spaces are finite.
Then, we show how these results can be generalized to SBGs with infinite strategy spaces by leveraging the GP assumption.
Finally, we experimentally eventuate our algorithms on a testbed made of randomly generated games and instances based on the missile-airplane security game described above.
For SBGs with finite strategy spaces, we also compare our algorithms with the M-LUCB algorithm introduced by~\cite{garivier2016maximin} (the current state-of-the-art method for learning maximin strategies in two-player zero-sum finite games), showing that our methods dramatically outperform it.~\footnote{The complete proofs of our theoretical results are available in Appendices~A,~B,~and~C.}

\section{Related Works} \label{sec:related_works}

Over the last years, the problem of learning approximate equilibria in SBGs received considerable attention from the AI community.
In this section, we survey the main state-of-the-art works on the problem of learning equilibria in SBGs, highlighting which are the crucial differences with our work.
Let us remark that the majority of these works focus on SBGs with finite strategy spaces, while, to the best of our knowledge, ours provides the first learning algorithms with theoretical guarantees for SBGs with infinite strategy spaces.

The first computational studies on SBGs date back to the work of~\citet{vorobeychik2007learning}, who focus on $n$-player general-sum games, experimentally evaluating standard regression techniques to learn \emph{Nash equilibria} (NEs) in such games.
Their approach is to first learn the players' payoff functions and then compute an NE in the game learned this way.
\citet{gatti2011equilibrium} extend this work to sequential games.
Given the nature of regression techniques, this approach also works for SBGs with infinite strategy spaces.
However, the proposed methodology does not allow us to derive theoretical guarantees on the approximation quality of the obtained solutions, and it does not adopt any principled rule for choosing the next query to be performed.
%
%
In contrast, our algorithms are $\delta$-PAC, and, by exploiting techniques from the best arm identification literature, they also perform queries intelligently, allowing for a great reduction in the used number of queries.

A similar approach, which is still based on learning payoff functions using regression, is adopted by some recent works studying finite SBGs with many symmetric players~\cite{wiedenbeck2018regression,sokota2019learning}.
Their goal is to exploit the symmetries so as to learn symmetric NEs in large games efficiently.
\citet{wiedenbeck2018regression} focus on GP regression, since, as they show experimentally, it leads to better performances compared to other techniques.
Subsequently,~\citet{sokota2019learning} provide an advancement over the previous work, using neural networks to approximate the utility function (instead of GPs) and providing a way to guide sampling so as to focus it on the neighborhood of candidate equilibrium points.
These works significantly depart from ours, since \emph{(i)} they aim at finding symmetric NEs in large SBGs with many symmetric players, \emph{(ii)} they are restricted to games with finite strategy spaces, and (\emph{iii)} they do not provide any theoretical guarantees on the quality of the obtained solutions.

Recently, some works proposed learning algorithms for finite SBGs, relying on the PAC framework to prove theoretical guarantees~\cite{viqueira2019empirical,viqueira2019learning,wright2019probably}.
Specifically,~\citet{viqueira2019learning} and~\citet{wright2019probably} focus on learning NEs in $n$-player general-sum games.
However, their results are limited to the case of finite strategy spaces and cannot be easily generalized to settings involving infinite strategy spaces, as they do not introduce any regularity assumption on the players' utility functions.
Moreover, the querying algorithms they propose are based on a global exploration of the strategy profiles of the game, without relying on specific selection rules, except for the elimination of sub-optimal strategies. 
In contrast, our algorithms exploit best arm identification techniques, and, thus, they employ principled selection rules that allow to focus queries on the most promising strategy profiles. 
%

It is also worth pointing out some works that, while being not directly concerned with SBGs, address related problems.
Recently, a growing attention has been devoted to no-regret learning algorithms in games with bandit feedback~\cite{kleinberg2005nearly,flaxman2005online,bravo2018bandit}.
The methods developed in this framework are significantly different from ours, as they fit the classical multi-armed bandit scenario where the objective is to minimize the cumulative regret.
Instead, we adopt the best arm identification perspective, where the goal is to identify an optimal arm with high confidence.
%
Thus, our querying algorithms might achieve large regret during the learning process, since they are focused on a pure exploration task in which exploitation is not a concern. 
Moreover, no-regret learning algorithms require strong assumptions to converge to equilibria in games with bandit feedback (such as, \emph{e.g.}, concavity of the players' utility functions~\cite{bravo2018bandit}).
In contrast, our theoretical guarantees do not need any explicit requirement on the utilities (except for a reasonable degree of smoothness, encoded by the GP assumption), and, thus, they also hold when the players' utility functions exhibit a complex (\emph{e.g.}, non-concave) dependence on the players' strategies.

There are also other related problems not directly connected with SBGs that are worth citing, such as, \emph{e.g.}, meta-game analysis~\cite{tuyls2018generalised}, learning unknown game parameters or players' rationality models by observing played actions~\cite{ling2018game,ling2019large}, combining supervised learning techniques with decision-making in uncertain optimization models~\cite{wilder2019melding}, and online learning in games~\cite{bisi2017regret}.

\section{Preliminaries}\label{sec:preliminaries}

A \emph{two-player zero-sum game with infinite strategy spaces} is a tuple $\Gamma = (\X, \Y, u)$, where
$\X \subset \mathbb{R}^d$ and $\Y \subset \mathbb{R}^d$ are compact and convex sets of strategies available to the first and the second player, respectively, while $u : \X \times \Y \mapsto \mathbb{R}$ is a function defining the utility for the first player.~\footnote{For the ease of presentation, in the following we focus on the case in which $\X \subset [0,1]$ and $\Y \subset [0,1]$ are closed intervals. The generalization of our results to the case in which the strategy spaces are compact and convex subsets of $\mathbb{R}^d$ is straightforward.}
Since the game is zero-sum, the second player's utility is given by $-u$.
A \emph{two-player zero-sum game with finite strategy spaces} is defined analogously, with $\X$ and $\Y$ being finite sets, \emph{i.e.}, $\X \coloneqq \{ x^1, \ldots, x^n \}$ and $\Y \coloneqq \{ y^1, \ldots, y^m \}$, with $n > 1$ and $m > 1$ denoting the finite numbers of strategies available to the first and the second player, respectively.
For the ease of notation, letting $\Pi \coloneqq \X \times \Y$, we denote with $\boldsymbol{\pi} \coloneqq (x,y) \in \Pi$ a \emph{strategy profile}, \emph{i.e.}, a tuple specifying a strategy $x \in \X$ for the first player and a strategy $y \in \Y$ for the second player.

In this work, we are concerned with the computation of \emph{maximin} strategies, adopting the perspective of the first player.
In words, we seek for a first player's strategy that maximizes her utility, assuming a worst-case opponent that acts so as to minimize it.
Since the game is zero sum, we can assume that the second player decides how to play after observing the first player's move, and, thus, playing a maximin strategy is the best choice for the first player.~\footnote{This assumption is in line with the classical Stackelberg model in which the second player (follower) gets to play after observing the strategy of the first one (leader)~\cite{stackelberg1934marktform}.}
Formally, given a first player's strategy $x \in \X$, we denote with $y^\ast(x) \in \arg \min_{y \in \Y} u(x,y)$ a second player's best response to $x$.
%
Then, $x^\ast \in \X$ is a maximin strategy for the first player if $x^\ast \in \arg \max_{x \in \X} u(x,y^\ast(x))$, with $\bpi^\ast \coloneqq (x^\ast, y^\ast(x^\ast))$ denoting its corresponding maximin strategy profile.~\footnote{Even though playing a maximin strategy may not be the optimal choice for the first player if the players are assumed to play simultaneously, this is the case if we require additional (mild) technical assumptions guaranteeing that $\bpi^\ast$ is an equilibrium point of the game; see~\cite{sion1958general} for additional details.}

\subsection{Simulation-Based Games}
In SBGs, the utility function $u$ is not readily available, but it is rather specified by an exogenous simulator that provides noisy point estimates of it.
As a result, in SBGs, one cannot explicitly compute a maximin strategy, and, thus, the problem is to learn one by sequentially querying the simulator.
At each round $t$, the simulator is given a strategy profile $\boldsymbol{\pi}_t \in \Pi$ and returns an estimated utility $\tilde{u}_t \coloneqq u(\boldsymbol{\pi}_t) + e_t$, where $e_t \sim \mathcal{N}(0, \lambda)$ is i.i.d.~Gaussian noise.
The goal is to find a good approximation (see Equation~\eqref{eq:delta_pac}) of a maximin strategy $x^\ast \in \X$ as rapidly as possible, \emph{i.e.}, limiting the number of queries to the simulator.
To achieve this, we follow the approach of~\citet{garivier2016maximin} and propose some \emph{dynamic querying algorithms} (see Algorithm~\ref{alg:interaction}, where $\textsc{Sim}(\bpi)$ represents a simulator query for $\bpi \in \Pi$), which are generally characterized by:
\begin{itemize}
	\item a querying rule that indicates which strategy profile $\boldsymbol{\pi}_t \in \Pi$ is sent as input to the simulator at each round $t$;
	\item a stopping rule that determines the round $T$ after which the algorithm terminates its execution;
	\item a final guess $\overline{\bpi} \coloneqq (\bar{x},\bar{y}) \in \Pi$ for the (true) maximin strategy profile $\bpi^\ast$ of the game.
\end{itemize}

\begin{algorithm}[t!]
	\centering
	\caption{Dynamic Querying Algorithm}
	\begin{algorithmic}[1]
		\State $t \gets 1$
		\Do
			\State \parbox[t]{\dimexpr\textwidth-\leftmargin-\labelsep-\labelwidth}{%
				Select a strategy profile $\bpi_{t} \in \Pi$ according \newline  \phantom{aaa} to the querying rule \strut}
			\State Get estimated utility $\tilde{u}_{t} \gets \textsc{Sim}(\bpi_{t})$
			\State Update the algorithm parameters using $\tilde{u}_{t}$
			\State $t \gets t + 1$
		\DoWhile{stopping condition is not met}
		\State\Return final guess $\overline{\bpi} = (\bar{x}, \bar{y})$ for the maximin profile
	\end{algorithmic}
	\label{alg:interaction}
\end{algorithm}

Given a desired approximation $\epsilon \geq 0$, the objective of the algorithm is to find an $\epsilon$-maximin strategy with high accuracy, using as few queries as possible to the simulator.
Formally, given $\delta \in (0,1)$, our goal is to design algorithms that are $\delta$-PAC, \emph{i.e.}, they satisfy:
\begin{equation} \label{eq:delta_pac}
	\forall u \quad \mathbb{P} \Big( \left| u(\bpi^\ast) - u(\bar{x}, y^\ast(\bar{x})) \right| \leq \epsilon \Big) \geq 1 - \delta,
\end{equation}
while keeping the number of rounds $T$ as small as possible.
This is known as the \emph{fixed-confidence} setting (see Section~\ref{sec:fixed_confidence}).
An alternative is to consider the \emph{fixed-budget} case, where the maximum number of rounds $T$ is given in advance, and the goal is to minimize the probability $\delta$ that $\bar{x}$ is not an $\epsilon$-maximin strategy (see Section~\ref{sec:fixed_budget}).
Notice that, for SBGs with finite strategy spaces, the $\delta$-PAC property in Equation~\eqref{eq:delta_pac} can only require $u(\bpi^\ast) - u(\bar{x}, y^\ast(\bar{x})) \leq \epsilon$, since it is always the case that $u(\bpi^\ast) > u(\bar{x}, y^\ast(\bar{x}))$.~\footnote{This is in line with the definition provided by~\citet{garivier2016maximin}.}

%

\subsection{Gaussian Processes}
To design $\delta$-PAC algorithms working with SBGs having infinite strategy spaces, we first need to introduce some regularity assumptions on the utility functions $u$.
In this work, we model the utility as a sample from a GP, which is a collection of dependent random variables, one for each action profile $\boldsymbol{\pi} \in \Pi$, every finite subset of which is multivariate Gaussian distributed~\cite{williams2006gaussian}.
A GP$(\mu(\bpi),k(\bpi,\bpi'))$ is fully specified by its \emph{mean} function $\mu : \Pi \mapsto \mathbb{R}$, with $\mu(\boldsymbol{\pi}) \coloneqq \mathbb{E}[u(\boldsymbol{\pi})]$, and its \emph{covariance} (or \emph{kernel}) function $k : \Pi \times \Pi \mapsto \mathbb{R}$, with $k(\boldsymbol{\pi},\boldsymbol{\pi}') \coloneqq \mathbb{E}[(u(\boldsymbol{\pi}) - \mu(\boldsymbol{\pi}))(u(\boldsymbol{\pi}') - \mu(\boldsymbol{\pi}'))]$.
W.l.o.g., we assume that $\mu \equiv \mathbf{0}$ and the variance is bounded, \emph{i.e.}, $k(\bpi, \bpi) \coloneqq \sigma^2 \leq 1$ for every $\bpi \in \Pi$.
Note that the GP assumption guarantees that the utility function $u$ has a certain degree of smoothness, without relying on rigid parametric assumptions, such as linearity.
Intuitively, the kernel function $k$ determines the correlation of the utility values across the space of strategy profiles $\Pi$, thus encoding the smoothness properties of the utility functions $u$ sampled from GP$(\mu(\bpi), k(\bpi,\bpi'))$ (for some examples of commonly used kernels, see Section~\ref{sec:experiments}).

We also need GPs in our algorithms, as they use GP$(\mathbf{0}, k(\bpi,\bpi'))$ as prior distribution over $u$.
The major advantage of working with GPs is that they admit simple analytical formulas for the mean and covariance of the posterior distribution.
These relations can be easily expressed using matrix notation, as follows.
Let $\tilde{\mathbf{u}}_t \coloneqq [\tilde{u}_1, \ldots, \tilde{u}_t]^\top$ be the vector of utility values observed up to round $t$, obtained by querying the simulator on the strategy profiles $\bpi_1, \ldots, \bpi_t$.
Then, the posterior distribution over $u$ is still a GP, with mean $\mu_t(\bpi)$, covariance $k_t(\bpi,\bpi')$, and variance $\sigma_t^2(\bpi)$, which are defined as follows:
\begin{align}
	\mu_t(\bpi) & \coloneqq \mathbf{k}_t(\bpi)^\top \left( K_t + \lambda I \right)^{-1} \tilde{\mathbf{u}}_t, \label{eq:gp_posterior_1}\\
	k_t(\bpi, \bpi') & \coloneqq k(\bpi,\bpi') - \mathbf{k}_t(\bpi)^\top \hspace{-1mm}\left( K_t + \lambda I \right)^{-1}  \mathbf{k}_t(\bpi'),\label{eq:gp_posterior_2} \\
	\sigma_t^2(\bpi) & \coloneqq k_t(\bpi,\bpi),\label{eq:gp_posterior_3}
\end{align}
where $\mathbf{k}_t(\bpi) \coloneqq [ k(\bpi,\bpi_1), \ldots, k(\bpi, \bpi_t) ]^\top$ and $K_t$ is the positive definite $t \times t$ kernel matrix, whose $(i,j)$-th entry is $k(\bpi_i, \bpi_j)$.
The posterior parameters update formulas can also be expressed recursively, thus avoiding costly matrix inversions, as shown in~\cite{chowdhury2017kernelized}.
Letting $\bpi_t$ and $\tilde{u}_t$ be, respectively, the queried strategy profile and the observed utility at round $t$, we can write:
\begin{align}
	&\mu_t(\bpi) \gets \mu_{t-1}(\bpi) + \frac{k_{t-1}(\bpi, \bpi_t)}{\lambda + \sigma^2_{t-1}(\bpi_t)} (\tilde{u}_t - \mu_{t-1}(\bpi_t)), \label{eq:gp_posterior_1_rec}\\
	&k_t(\bpi, \bpi') \gets k_t(\bpi, \bpi') - \frac{k_{t-1}(\bpi, \bpi_t) k_{t-1}(\bpi_t, \bpi')}{\lambda + \sigma^2_{t-1}(\bpi_t)}, \label{eq:gp_posterior_2_rec}\\
	&\sigma_t^2(\bpi) \gets \sigma_{t-1}^2(\bpi) - \frac{k_{t-1}^2(\bpi, \bpi_t)}{\lambda + \sigma^2_{t-1}(\bpi_t)}. \label{eq:gp_posterior_3_rec}
\end{align}
Clearly, at the beginning of the algorithms, the estimates are initialized using the GP prior GP$(\mathbf{0},k(\bpi,\bpi'))$, \emph{i.e.}, formally, $\mu_0(\bpi) \coloneqq \mathbf{0}$, $k_0(\bpi, \bpi') \coloneqq k(\bpi, \bpi')$, and $\sigma_0^2(\bpi) \coloneqq k(\bpi, \bpi) = \sigma^2$.

\section{Fixed-Confidence Setting}\label{sec:fixed_confidence}

In this section and the following one (Section~\ref{sec:fixed_budget}), we present our learning algorithms for the easiest setting of SBGs with finite strategy spaces.
Then, in Section~\ref{sec:continuous}, we show how they can be extended to SBGs with infinite strategy spaces. 

For the fixed-confidence setting, we propose a $\delta$-PAC dynamic querying algorithm (called M-GP-LUCB, see Algorithm~\ref{alg:m_gplucb}) based on the M-LUCB approach introduced by~\citet{garivier2016maximin} and provide a bound on the number of rounds $T_\delta$ it requires, as a function of the confidence level $\delta$.
While our algorithm shares the same structure as M-LUCB, it uses confidence bounds relying on the GP assumption, and, thus, different proofs are needed to show its $\delta$-PAC properties.
As shown in Section~\ref{sec:continuous}, our algorithm and its theoretical guarantees have the crucial advantage of being easily generalizable to SBGs with infinite strategy spaces.

For every strategy profile $\bpi \in \Pi$, the algorithm keeps track of a confidence interval $[L_t(\bpi), U_t(\bpi)]$ on $u(\bpi)$ built using the utility values $\tilde{u}_t$ observed from the simulator up to round $t$.
Using GP$(\mathbf{0}, k(\bpi,\bpi'))$ as prior distribution over the utility function $u$, the lower bounds of the intervals are defined as $L_t(\bpi) \coloneqq \mu_t(\bpi) - \sqrt{b_t} \sigma_t(\bpi)$ and the upper bounds as $U_t(\bpi) \coloneqq \mu_t(\bpi) + \sqrt{b_t} \sigma_t(\bpi)$, where $\mu_t$ and $\sigma_t^2$ are the mean and the variance of the posterior distribution computed with observations up to round $t$ (see Equations~\eqref{eq:gp_posterior_1}--\eqref{eq:gp_posterior_3}), while $b_t$ is an exploration term that depends from the context (see Theorem~\ref{thm:fixed_confidence}).

At the end of every even round $t$, the algorithm selects the strategy profiles to give as inputs to the simulator during the next two rounds $t+1$ and $t+2$. 
For every $x \in \X$, let 
\begin{equation*}
	\gamma_t(x) \coloneqq \argmin_{y \in \Y} L_t(x,y)
\end{equation*}
be the second player's best response to $x$ computed using the lower bounds $L_t$.
Moreover, let
\begin{equation*}
	\bar{x}_t \coloneqq \argmax_{x \in \X} \min_{y \in \Y} \mu_t(x,y)
\end{equation*}
be the maximin strategy computed using the posterior mean $\mu_t$.
Then, in the following two rounds, the algorithm selects the strategy profiles $\bpi_{t+1}$ and $\bpi_{t+2}$, defined as follows:
\begin{align}
	&\bpi_{t+1} \coloneqq (\bar{x}_t, \gamma_t(\bar{x}_t)) \label{eq:select_m_lucb_1} \\
	&\bpi_{t+2} \coloneqq \argmax_{\bpi \in \{ (x, \gamma_t(x)) \}_{x \neq \bar{x}_t}}  U_t(\bpi). \label{eq:select_m_lucb_2}
\end{align}
This choice is made so as to advance the algorithm towards its termination.
In particular, the M-GP-LUCB algorithm stops when, according to the confidence intervals, the strategy $\bar{x}_t$ is probably approximately better than all the others, \emph{i.e.}, when it holds $L_t(\bpi_{t+1}) > U_t(\bpi_{t+2}) - \epsilon$.
%
%
Intuitively, $\bpi_{t+1}$ represents the best candidate for being a maximin strategy profile, while $\bpi_{t+2}$ is the second-best candidate.
Thus, the algorithm stops if $\bpi_{t+1}$ is better than $\bpi_{t+2}$ with sufficiently high confidence, \emph{i.e.}, whenever the lower bound for the former is larger than the upper bound for the latter (up to an approximation of $\epsilon$).
The final strategy profile recommended by the algorithm is $\overline{\bpi} \coloneqq (\bar{x}_t, \gamma_t(\bar{x}_t))$.

\begin{algorithm}[t!]
	\centering
	\caption{\textsc{M-GP-LUCB}($\epsilon$, $\delta$)}
	\begin{algorithmic}[1]
		\State Initialize $t \gets 0$, $\mu_0(\bpi) \gets \mathbf{0}$, $k_0(\bpi,\bpi') \gets k(\bpi,\bpi')$
		\Do
			\State Select $\bpi_{t+1}$ and $\bpi_{t+2}$ using Eqs.~\eqref{eq:select_m_lucb_1}--\eqref{eq:select_m_lucb_2}
			\State $\tilde{u}_{t+1} \gets \textsc{Sim}(\bpi_{t+1})$, $\tilde{u}_{t+2} \gets \textsc{Sim}(\bpi_{t+2})$
			\State \parbox[t]{\dimexpr\textwidth-\leftmargin-\labelsep-\labelwidth}{%
				Compute $\mu_{t+2}(\bpi)$ and $k_{t+2}(\bpi,\bpi')$ using \newline \phantom{aaa} observations $\tilde{u}_{t+1}$, $\tilde{u}_{t+2}$ and Eqs.~\eqref{eq:gp_posterior_1_rec}--\eqref{eq:gp_posterior_3_rec} \strut}
			\State $t \gets t + 2$
		\DoWhile{$L_t(\bpi_{t+1}) \leq U_t(\bpi_{t+2}) - \epsilon$}
		\State\Return $\overline{\bpi} = (\bar{x}_t, \gamma_t(\bar{x}_t))$
	\end{algorithmic}
	\label{alg:m_gplucb}
\end{algorithm}

The following theorem shows that M-GP-LUCB is $\delta$-PAC and provides an upper bound on the number of rounds $T_\delta$ it requires.
The analysis is performed for $\epsilon = 0$, \emph{i.e.}, when $\overline{\bpi}$ is evaluated with respect to an exact maximin profile.~\footnote{Assuming $\epsilon = 0$ also requires the additional w.l.o.g.~assumption that the utility value of an exact maximin strategy and that one of a second-best maximin strategy are different.}
Note that the upper bound for $T_\delta$ depends on the utility-dependent term $H^\ast(u) \coloneqq \sum_{\bpi \in \Pi}c (\bpi)$, where, for $\bpi = (x,y) \in \Pi$, $c(\bpi)$ is defined as follows:
\begin{equation*}
	c(\bpi) \coloneqq \frac{1}{\max \left\{(\Delta^*)^2, \left(\frac{u(x^\ast, y^\ast(x^\ast)) + u(x^{\ast\ast}, y^\ast(x^{\ast\ast}))}{2} - u(x,y^\ast(x)) \right)^2 \right\} },
\end{equation*}
where, for the ease of writing, we let $\Delta^* \coloneqq u(\bpi) - u(x, y^\ast(x))$ and $x^{\ast\ast} \in \argmax_{x \in \X \setminus \{x^\ast\}} u(x, y^\ast(x))$, \emph{i.e.}, $x^{\ast\ast}$ is a first player's maximin strategy when $x^\ast$ is removed from the available ones.
This term has the same role as $H_1 \coloneqq \sum_{i \in \{1, \ldots, |\Pi|\}} \frac{1}{\Delta_{(i)}^2}$ used by~\citet{audibert2010best} in the best arm identification setting, where $\bpi^i$ is the $i$-th strategy profile in $\Pi$, which is ordered in such a way that, letting $\Delta_{(i)} \coloneqq |u(\bpi^*) - u(\bpi^i)|$, it holds $\Delta_{(1)} \leq \Delta_{(2)} \leq \ldots \leq \Delta_{(|\Pi|)}$. 
%
%
Intuitively, $H^*(u)$ and $H_1$ characterize the hardness of the problem instances by determining the amount of rounds required to identify the maximin profile and the best arm, respectively.

\begin{restatable}[]{theorem}{thmconfidence}\label{thm:fixed_confidence}
	Using a generic nondecreasing exploration term $b_t > 0$, the M-GP-LUCB algorithm stops its execution after at most $T_\delta$ rounds, where:
	\begin{equation}
		T_\delta \leq \inf \left\{t \in \mathbb{N} \ : \ 8 \,H^*(u)\, b_t \,\lambda - \frac{\lambda \,n\, m}{\sigma^2} < t \right\}. \label{eq:fixed_conf}
	\end{equation}
	Specifically, letting $b_t := 2 \log \left( \frac{n\, m\, \pi^2\, t^2}{6 \delta} \right)$, the algorithm returns a maximin profile with confidence at least $1 - \delta$, and:
	\begin{align}
		T_\delta \leq 64 \,H^*(u)\, \lambda \left( \log \left( 64 \,H^*(u)\, \lambda \,\pi \sqrt{\frac{n\, m}{6 \delta}} \right) +  \right. \nonumber\\
	\left. + 2 \log \left( \log \left( 64 \,H^*(u) \lambda \pi \sqrt{\frac{n\, m}{6 \delta}} \right) \right) \right), \label{eq:fixed_conf_final}
	\end{align}
	where we require that $64 \,\lambda \,\pi \,\sqrt{\frac{n\, m}{6 \delta}} > 4.85$.
\end{restatable}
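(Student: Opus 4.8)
The plan is to run the whole analysis on the event $\mathcal{E}$ that all confidence intervals are valid, i.e.\ $L_t(\bpi)\le u(\bpi)\le U_t(\bpi)$ for every $\bpi\in\Pi$ and every round $t$. I would first bound $\mathbb{P}(\mathcal{E})$: conditionally on the observations $u(\bpi)-\mu_t(\bpi)$ is Gaussian with standard deviation $\sigma_t(\bpi)$, so $\mathbb{P}\big(|u(\bpi)-\mu_t(\bpi)|>\sqrt{b_t}\,\sigma_t(\bpi)\big)\le e^{-b_t/2}$, and a union bound over the $nm$ profiles and the rounds (using $\sum_t t^{-2}=\pi^2/6$) gives $\mathbb{P}(\mathcal{E})\ge 1-\delta$ for $b_t=2\log\!\big(nm\pi^2 t^2/(6\delta)\big)$; the adaptivity of the querying rule is absorbed by the standard GP concentration argument. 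On $\mathcal{E}$ the $\delta$-PAC guarantee is then immediate: writing $v(x):=u(x,y^\ast(x))$ and using $L_t(x,\gamma_t(x))\le v(x)\le U_t(x,\gamma_t(x))$, at the stopping round $L_t(\bpi_{t+1})>U_t(\bpi_{t+2})$ forces $v(\bar x_t)>\max_{x\ne\bar x_t}v(x)$, hence $\bar x_t=x^\ast$ and $u(\bar x,y^\ast(\bar x))=u(\bpi^\ast)$ (the $\epsilon=0$ analysis uses the w.l.o.g.\ assumption $v(x^\ast)\neq v(x^{\ast\ast})$, so $\Delta:=v(x^\ast)-v(x^{\ast\ast})>0$).

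For the round bound I would record the deterministic consequence of the variance recursion \eqref{eq:gp_posterior_3_rec}, namely $\sigma_t^2(\bpi)\le \lambda\sigma^2/(\lambda+N_t(\bpi)\sigma^2)$, where $N_t(\bpi)$ counts the queries to $\bpi$ so far; thus $N_t(\bpi)\ge 8\,b_t\,\lambda\,c(\bpi)-\lambda/\sigma^2$ implies $\sqrt{b_t}\,\sigma_t(\bpi)\le \tfrac{1}{2\sqrt2}\,c(\bpi)^{-1/2}=\tfrac{1}{2\sqrt2}\max\{|\Delta^\ast|,|\bar v-v(x)|\}$, with $\bar v:=\tfrac12(v(x^\ast)+v(x^{\ast\ast}))$ and $\Delta^\ast=u(\bpi)-v(x)$ for $\bpi=(x,y)$. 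The core of the proof is the claim that, on $\mathcal{E}$, at every round before termination \emph{at least one} of the two queried profiles $\bpi_{t+1},\bpi_{t+2}$ still has $N_t(\cdot)<8\,b_t\,\lambda\,c(\cdot)-\lambda/\sigma^2$. I would prove this by contradiction, assuming both are ``sufficiently sampled'' and writing $\bpi_i=(x_i,\gamma_t(x_i))$ for $i\in\{t+1,t+2\}$ (so $x_{t+1}=\bar x_t$). Two observations do most of the work: (i) since $\gamma_t(x)$ minimises $L_t(x,\cdot)$, $L_t(\bpi_i)\le L_t(x_i,y^\ast(x_i))\le v(x_i)$ forces $\Delta^\ast(\bpi_i)\le 2\sqrt{b_t}\sigma_t(\bpi_i)$, which combined with ``sufficiently sampled'' pins $\sqrt{b_t}\sigma_t(\bpi_i)\le |\bar v-v(x_i)|/(2\sqrt2)$ and hence localises $L_t(\bpi_i),U_t(\bpi_i)$ and $\min_y\mu_t(x_i,y)$ within $O(|\bar v-v(x_i)|)$ of $v(x_i)$; (ii) $\min_y\mu_t(x_i,y)\ge L_t(\bpi_i)$. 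Splitting on the position of $x^\ast$: if $x^\ast\notin\{\bar x_t,x_{t+2}\}$ then $U_t(\bpi_{t+2})\ge U_t(x^\ast,\gamma_t(x^\ast))\ge v(x^\ast)$ while (i) gives $U_t(\bpi_{t+2})<\bar v<v(x^\ast)$; if $x^\ast=\bar x_t$, then (i) gives $L_t(\bpi_{t+1})>\bar v>U_t(\bpi_{t+2})$, i.e.\ the algorithm would already have stopped; if $x^\ast=x_{t+2}$, then (i)+(ii) give $\min_y\mu_t(x^\ast,y)\ge v(x^\ast)-\Delta/(2\sqrt2)$ and $\min_y\mu_t(\bar x_t,y)\le v(\bar x_t)+\big(v(x^\ast)-v(\bar x_t)-\Delta/2\big)/(2\sqrt2)$, and since $v(x^\ast)-v(\bar x_t)\ge\Delta$ the former strictly exceeds the latter, contradicting $\bar x_t\in\argmax_x\min_y\mu_t(x,y)$. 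All three numeric inequalities close because $2\sqrt2-1>\tfrac12$.

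Given the core claim, \eqref{eq:fixed_conf} follows from the usual LUCB-type counting: on $\mathcal{E}$ each round before termination charges a query to a profile $\bpi$ with $N_t(\bpi)<8\,b_t\,\lambda\,c(\bpi)-\lambda/\sigma^2$, a profile can be so charged at most $\lceil 8\,b_T\,\lambda\,c(\bpi)-\lambda/\sigma^2\rceil$ times (monotonicity of $b_t$ and of $N_t$), and $T=\sum_{\bpi\in\Pi}N_T(\bpi)$; summing over $\Pi$ and keeping track of the two-queries-per-round and of the rounding gives $T\le 8\,H^\ast(u)\,b_T\,\lambda-\lambda\,nm/\sigma^2$, hence \eqref{eq:fixed_conf}. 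For \eqref{eq:fixed_conf_final} I would substitute $b_t=2\log(nm\pi^2t^2/(6\delta))$ into \eqref{eq:fixed_conf}, reduce the resulting condition to the form $C\log(\beta t)<t$ with $C=32\,H^\ast(u)\,\lambda$ and $\beta=\pi\sqrt{nm/(6\delta)}$, and invoke the standard estimate for the least $t$ satisfying such an inequality, which is valid once $2C\beta=64\,H^\ast(u)\,\lambda\,\pi\sqrt{nm/(6\delta)}$ exceeds a fixed numeric constant --- the hypothesis $64\,\lambda\,\pi\sqrt{nm/(6\delta)}>4.85$ is exactly what secures this.

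I expect the combinatorial core claim to be the main obstacle, and specifically the case $x^\ast=x_{t+2}$: because the algorithm only ever samples the lower-confidence best responses $\gamma_t(x)$ and ranks strategies through $\min_y\mu_t(x,y)$, one cannot control $\min_y\mu_t(x^\ast,y)$ directly from the variances of the two sampled profiles, and the argument hinges on the inequality $\min_y\mu_t(x^\ast,y)\ge L_t(x^\ast,\gamma_t(x^\ast))=L_t(\bpi_{t+2})$, which turns a bound on one sampled profile into a bound on the whole ``column'' of $x^\ast$. Getting the constant in front of $H^\ast(u)$ (and the $\lambda nm/\sigma^2$ correction) exactly as in the statement, rather than up to a constant factor, is the other delicate point, and is where the precise form of the numeric inequalities and of the counting has to be used.
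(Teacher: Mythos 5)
Your overall architecture is the same as the paper's: the validity event via the Srinivas-type union bound (the paper just cites Lemma~5.1 of \cite{srinivas2009gaussian}), the variance-vs-counts bound $\sigma_t^2(\bpi)\le\lambda/(\lambda/\sigma^2+N_t(\bpi))$, a LUCB-style core claim about the two queried profiles, a charging argument, and the inversion of $t> C\log(\beta t)$ via Lemma~12 of Kaufmann et al. The place where you diverge is the core claim, and that is where there is a genuine gap. The paper does \emph{not} prove ``one of $\bpi_{t+1},\bpi_{t+2}$ is under-sampled'' directly; it imports Lemma~10 of \cite{garivier2016maximin}: on the good event, at any even round before stopping the fixed threshold $c=\bar v:=\tfrac12\big(v(x^\ast)+v(x^{\ast\ast})\big)$ lies inside the confidence interval of one of the two queried profiles. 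Since on the good event $v(x)$ also lies in that same interval (via $L_t(\bpi)\le L_t(x,y^\ast(x))\le v(x)\le u(\bpi)\le U_t(\bpi)$), both $|c-v(x)|$ and $|u(\bpi)-v(x)|$ are bounded by the \emph{full} interval width $2\sqrt{b_t}\sigma_t(\bpi)$, which immediately gives $N_t(\bpi)\le 4\,b_t\lambda\,c(\bpi)-\lambda/\sigma^2$ for that profile and, after doubling for the two-queries-per-decision, the constant $8H^\ast(u)$ in Eq.~\eqref{eq:fixed_conf}.

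Your replacement argument does not close with the constants you chose. From ``sufficiently sampled'' at threshold $8\,b_t\lambda\,c(\bpi)$ you only get the half-width bound $\sqrt{b_t}\sigma_t(\bpi)\le \tfrac{1}{2\sqrt2}\max\{\Delta^\ast,|\bar v-v(x)|\}$, and the only control on the upper bound of the second queried profile is $U_t(\bpi_{t+2})\le v(x_{t+2})+\Delta^\ast+2\sqrt{b_t}\sigma_t\le v(x_{t+2})+\sqrt2\,\big(\bar v-v(x_{t+2})\big)=\bar v+(\sqrt2-1)\big(\bar v-v(x_{t+2})\big)$, which is \emph{larger} than $\bar v$; when $\bar v-v(x_{t+2})$ is large compared with $\Delta$ it also exceeds $v(x^\ast)$ and $L_t(\bpi_{t+1})$. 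Hence your case $x^\ast\notin\{\bar x_t,x_{t+2}\}$ (where you need $U_t(\bpi_{t+2})<v(x^\ast)$) and your case $x^\ast=\bar x_t$ (where you need $U_t(\bpi_{t+2})<L_t(\bpi_{t+1})$) both fail; the slogan ``$2\sqrt2-1>\tfrac12$'' does not rescue them (only your case $x^\ast=x_{t+2}$ actually closes). To make those cases work you need the half-width to be at most $\tfrac14\,c(\bpi)^{-1/2}$, i.e.\ threshold $16\,b_t\lambda\,c(\bpi)$, and in addition your counting charges one query \emph{per round}, while the core claim (yours or the paper's) only guarantees one under-sampled profile per \emph{pair} of rounds, costing another factor~$2$. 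A patched version of your route therefore yields a bound of order $32\,H^\ast(u)\,b_t\lambda$ rather than the $8\,H^\ast(u)\,b_t\lambda$ of Eq.~\eqref{eq:fixed_conf}, so the exact constants in Eqs.~\eqref{eq:fixed_conf}--\eqref{eq:fixed_conf_final} are not recovered. The $\delta$-PAC part, the variance lemma, and the final inversion step in your proposal are fine and match the paper.
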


Intuitively, from the result in Theorem~\ref{thm:fixed_confidence}, we can infer that the most influential terms on the number of rounds required to get a specific confidence level $\delta$ are $H^*(u)$ and the noise variance $\lambda$, which impact as multiplicative constants on $T_\delta$.
On the other hand, $T_\delta$ scales only logarithmically with the number of strategy profiles $|\Pi| = n\, m$, thus allowing the execution of the M-GP-LUCB algorithm also in settings where the players have a large number of strategies available.

\begin{algorithm}[t!]
	\centering
	\caption{\textsc{GP-SE}($T$)}
	\begin{algorithmic}[1]
		\State Initialize $\Pi_1 \gets \Pi$, $\mu_0(\bpi) \gets \mathbf{0}$
		\For{$p = 1,2, \ldots, P-1$}
		\State \parbox[t]{\dimexpr\textwidth-\leftmargin-\labelsep-\labelwidth}{%
			For each $\bpi \in \Pi_p$, query $\textsc{Sim}(\bpi)$ for \newline \phantom{aaa} $T_p - T_{p-1}$ rounds \strut}
		\State Compute $\mu_p(\bpi)$ using observations
		\State Select $\bpi_p$ according to Eqs.~\eqref{eq:xfixbud}--\eqref{eq:yfixbud}
		\State $\Pi_{p+1} \gets \Pi_p \setminus \{ \bpi_p \}$
		\EndFor
		\State\Return the unique element $\overline{\bpi}$ of $\Pi_{P}$
	\end{algorithmic}
	\label{alg:gp_se}
\end{algorithm}

\section{Fixed-Budget Setting}\label{sec:fixed_budget}

In the fixed-budget setting, the goal is to design $\delta$-PAC algorithms that, given the maximum number of available rounds $T$ (\emph{i.e.}, the budget), find an $\epsilon$-maximin strategy with confidence $1 - \delta_T$ as large as possible.
We propose a successive elimination algorithm (called GP-SE, see Algorithm~\ref{alg:gp_se}), which is based on an analogous method proposed by~\citet{audibert2010best} for the best arm identification problem.
The fundamental idea behind our GP-SE algorithm is a novel elimination rule, which is suitably defined for the problem of identifying maximin strategies.

The algorithm works by splitting the number of available rounds $T$ into $P - 1$ phases, where, for the ease of notation, we let $P \coloneqq |\Pi| = n\,m$ be the number of players' strategy profiles.
At the end of each phase, the algorithm excludes from the set of candidate solutions the strategy profile that has the lowest chance of being maximin.
Specifically, letting $\Pi_p$ be the set of the remaining strategy profiles during phase $p$, at the end of $p$, the algorithm dismisses the strategy profile $\bpi_p \coloneqq (x_p, y_p) \in \Pi_p$, defined as follows:
\begin{align}
	(x_p, \boldsymbol{\cdot}) & \coloneqq \argmin_{\bpi \in \Pi_p} \mu_p (\bpi), \label{eq:xfixbud}\\
	y_p & \coloneqq \argmax_{y \in \Y : (x_p, y) \in \Pi_p} \mu_p (x_p, y), \label{eq:yfixbud}
\end{align}
where $\mu_p$ represents the mean of the posterior distribution computed at the end of phase $p$ (see Equations~\eqref{eq:gp_posterior_1}-\eqref{eq:gp_posterior_3}).
Intuitively, the algorithm selects the first player's strategy $x_p$ that is less likely to be a maximin one, together with the second player's strategy $y_p$ that is the worst for her given $x_p$.
%
%
At the end of the last phase, the (unique) remaining strategy profile $\overline{\bpi} = (\bar{x}, \bar{y})$ is recommended by the algorithm.

Following~\cite{audibert2010best}, the length of the phases have been carefully chosen so as to obtain an optimal (up to a logarithmic factor) convergence rate.
Specifically, letting $\overline{\log}(P) \coloneqq \frac{1}{2} + \sum_{i = 2}^{P} \frac{1}{i}$, let us define $T_0 \coloneqq 0$ and, for every phase $p \in \{1, \ldots, P-1 \}$, let:
\begin{equation}
	T_p \coloneqq \left\lceil  \frac{T - P}{\overline{\log}(P) (P + 1 -p )} \right\rceil.
\end{equation}
Then, during each phase $p$, the algorithm selects every remaining strategy profile in $\Pi_p$ for exactly $T_p - T_{p-1}$ rounds.
Let us remark that the algorithm is guaranteed to do not exceed the number of available rounds $T$.
Indeed, each $\bpi_p$ is selected for $T_p$ rounds, while $\overline{\bpi}$ is chosen $T_{P - 1}$ times, and $\sum_{p = 1}^{P-1} T_{p} + T_{P-1} \leq T$ holds by definition.

The following theorem provides an upper bound on the probability $\delta_T$ that the strategy profile $\overline{\bpi}$ recommended by the GP-SE algorithm is \emph{not} $\epsilon$-maximin, as a function of the number of rounds $T$.
As for the fixed-confidence setting, our result holds for $\epsilon = 0$.

\begin{restatable}[]{theorem}{thmse} \label{thm:fixed_budget}
	Letting $T$ be the number of available rounds, the GP-SE algorithm returns a maximin strategy profile $\bpi^\ast$ with confidence at least $1 - \delta_T$, where:
	\begin{equation} \label{eq:fixed_bud}
		\delta_T \coloneqq 2 P (n + m - 2) e^{-\frac{T - P}{8 \lambda \overline{\log}(P) H_2}},
	\end{equation}
	and $H_2 \coloneqq \max_{i \in \{1, \ldots, P\}} i \, \Delta_{(i)}^{-2}$.
\end{restatable}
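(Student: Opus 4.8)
The plan is to follow the template of the successive-elimination analysis of~\citet{audibert2010best}, but adapted to the maximin objective and to the GP confidence bounds. The key object is the ``good event'' $\mathcal{E}$ on which all posterior means are close to the true utilities. Specifically, I would first establish a concentration lemma: conditioned on the GP prior and the Gaussian noise model, for a strategy profile $\bpi$ queried $s$ times the posterior mean $\mu_s(\bpi)$ satisfies $\mathbb{P}(|\mu_s(\bpi) - u(\bpi)| > \eta) \le 2 e^{-s \eta^2 / (2\lambda)}$ (the $\sigma^2 \le 1$ assumption keeps the posterior variance controlled, and $\lambda$ is the noise variance). This is the GP analogue of Hoeffding/sub-Gaussian concentration and is exactly where the excerpt's regularity assumptions enter; I expect this to be the technical heart of the argument.

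Next I would set up the phase structure. During phase $p$ every surviving profile has been queried at least $T_p$ times, so on the complement of a union-bounded bad event the estimates $\mu_p(\bpi)$ for all $\bpi \in \Pi_p$ are within $\eta_p$ of $u(\bpi)$, where I will choose $\eta_p$ of order $\sqrt{\lambda \log(\cdot)/T_p}$. The core combinatorial step is to show that on this good event the maximin profile $\bpi^\ast$ is never eliminated. Here I would argue by contradiction using the elimination rule~\eqref{eq:xfixbud}--\eqref{eq:yfixbud}: if $\bpi^\ast = (x^\ast, y^\ast(x^\ast))$ were the profile dismissed in phase $p$, then $x^\ast$ minimizes $\min_y \mu_p(x,y)$ over $x$ with surviving profiles, and $y^\ast(x^\ast)$ maximizes $\mu_p(x^\ast, \cdot)$ among surviving second-player strategies. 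I then compare $u(x^\ast, y^\ast(x^\ast))$ against the corresponding quantity for any still-surviving first-player strategy $x$: the per-profile estimation error is at most $\eta_p$, but each first-player strategy contributes up to $n + m - 2$ relevant profiles (the arms indexing the inner min over $y$ and the outer comparison), which is precisely where the factor $(n+m-2)$ in~\eqref{eq:fixed_bud} comes from, and where the gap must exceed $\Delta_{(i)}$ for the appropriately ranked index $i$.

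Then I would do the accounting exactly as in~\cite{audibert2010best}: rank the profiles by their gaps $\Delta_{(1)} \le \cdots \le \Delta_{(P)}$; if $\bpi^\ast$ survives all phases then by the time only $P + 1 - p$ profiles remain at least one surviving profile has gap at least $\Delta_{(p)}$ (pigeonhole on how many ``bad'' profiles can still be alive), so the failure probability in phase $p$ is at most $2(n+m-2) e^{-T_p \Delta_{(p)}^2 / (8\lambda)}$. Substituting $T_p = \lceil (T-P)/(\overline{\log}(P)(P+1-p)) \rceil \ge (T-P)/(\overline{\log}(P)(P+1-p))$ and bounding $(P+1-p)\,\Delta_{(p)}^{-2} \le \max_i i\,\Delta_{(i)}^{-2} = H_2$ makes every exponent at least $(T-P)/(8\lambda \overline{\log}(P) H_2)$. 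A union bound over the $P-1 < P$ phases yields the claimed $\delta_T = 2P(n+m-2)\exp\!\big(-(T-P)/(8\lambda\overline{\log}(P)H_2)\big)$. The main obstacle, as noted, is getting the sub-Gaussian concentration for the GP posterior mean with the clean constant $2\lambda$ in the exponent; once that is in hand, the elimination-survival argument and the arithmetic with the phase lengths are routine adaptations of the finite best-arm-identification proof, with the maximin structure only affecting which and how many arms enter each comparison.
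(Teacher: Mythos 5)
Your plan follows essentially the same route as the paper's proof: a successive-elimination analysis in the style of Audibert and Bubeck, adapted to the maximin elimination rule, with the concentration step supplied by the GP model. Two remarks on how it lines up. First, the ``technical heart'' you anticipate is a one-liner in the paper's (Bayesian) framework: conditioned on the observations, $u(\bpi)-\mu_p(\bpi)$ is Gaussian with variance $\sigma_p^2(\bpi)$, and an ancillary lemma bounds $\sigma_p^2(\bpi)\le \lambda/(\lambda/\sigma^2+N_p(\bpi))$, which yields exactly the tails of the form $e^{-T_p\Delta^2/(8\lambda)}$ you posit (the probability is over the draw of $u$ and the noise, not a frequentist statement for fixed $u$). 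Second, the paper does not route through a global good event and a contradiction; it bounds, phase by phase, the probability that $\bpi^\ast$ is the dismissed profile, splitting into two disjoint cases according to whether all other profiles $(x^\ast,y)$ in the row of $x^\ast$ have already been eliminated: the first case forces $\mu_p(\bpi^\ast)\ge\mu_p(x^\ast,y)$ for surviving $y$ ($m-1$ comparisons), the second forces $\mu_p(\bpi^\ast)\le\mu_p(x,y^\ast(x))$ for $x\neq x^\ast$ ($n-1$ comparisons, possible only from phase $m-1$ on). That is exactly where $n+m-2$ comes from; your account of this factor is vaguer but points at the same comparisons, and the difference in framing is cosmetic.

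The one step that fails as written is your gap-ranking arithmetic. After $p-1$ eliminations, $P+1-p$ profiles survive, so the pigeonhole gives a surviving profile with gap at least $\Delta_{(P+1-p)}$, not $\Delta_{(p)}$, and the inequality you invoke, $(P+1-p)\,\Delta_{(p)}^{-2}\le H_2$, is false in general (for small $p$ it would essentially require $P\,\Delta_{(1)}^{-2}\le H_2$, which fails whenever the smallest gaps are much smaller than the largest ones). The correct version, $(P+1-p)\,\Delta_{(P+1-p)}^{-2}\le H_2$, is immediate from the definition of $H_2$ and is what the phase lengths $T_p$ are designed to match; with that index corrected, your accounting reproduces the claimed $\delta_T=2P(n+m-2)e^{-(T-P)/(8\lambda\overline{\log}(P)H_2)}$. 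Be aware that the paper itself is terse at this point: it passes from the maximin-specific gaps $\Delta(x^\ast,y)$ and $\Delta(x,y^\ast(x))$ directly to $H_2$ without spelling out the ranking argument, so the link between those comparison gaps and the ordered $\Delta_{(i)}$ is left implicit both in your sketch and in the paper.
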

As also argued by~\citet{audibert2010best}, a successive elimination method provides two main advantages over a simple round robin querying strategy in which every strategy profile is queried for the same number of rounds. 
First, it provides a similar bound on $\delta_T$ with a better dependency on the parameters, and, second, it queries the maximin strategy profile a larger number of times, thus returning a better estimate of its expected utility.

\section{Simulation-Based Games with Infinite Strategy Spaces}\label{sec:continuous}

We are now ready to provide our main results on SBGs with infinite strategy spaces.
In the first part of the section, we show how the $\delta$-PAC algorithms proposed in Sections~\ref{sec:fixed_confidence}~and~\ref{sec:fixed_budget} for finite SBGs can be adapted to work with infinite strategy spaces while retaining some theoretical guarantees on the returned $\epsilon$-maximin profiles.
This requires to work with a (finite) discretized version of the original (infinite) SBGs, where the players' strategy spaces are approximated with grids made of equally spaced points.
Then, in the second part of the section, we provide some results for the situations in which one cannot work with this kind of discretization, and, instead, only a limited number of points is sampled from the players' strategy spaces.
This might be the case when, \emph{e.g.}, the dimensionality $d$ of the players' strategy spaces is too high, or there are some constraints on the strategy profiles that can be queried.
Clearly, in this setting, we cannot prove $\delta$-PAC results, as the quality of the $\epsilon$-maximin strategy profiles inevitable depends on how the points are selected.

Let us remark that our main results rely on our assumption that the utility function $u$ is drawn from a GP, provided some mild technical requirements are satisfied (see Assumption~\ref{ass:smooth}).

\subsection{$\delta$-PAC Results for Evenly-Spaced Grids}

The idea is to work with a discretization of the players' strategy spaces, each made of at least $K_\epsilon$ equally spaced points, where $\epsilon \geq 0$ is the desired approximation level.  
This induces a new (restricted) SBGs with finite strategy spaces, where techniques presented in the previous sections can be applied.
In the following, for the ease of presentation, given an SBG with infinite strategy spaces $\Gamma$, we denote with $\Gamma(K)$ the finite SBG obtained when approximating the players' strategy spaces with $K$ equally spaced points, \emph{i.e.}, a game in which the players have $n = m = K$ strategies available and the utility value of each of the $n\,m$ strategy profiles is the same as that one of the corresponding strategy profile in $\Gamma$.
First, let us introduce the main technical requirement that we need for our results to hold.

\begin{restatable}[Kernel Smoothness]{assumption}{asssmooth}\label{ass:smooth}
	A kernel $k(\bpi,\bpi')$ is said to be \emph{smooth} over $\Pi$ if, for each $L > 0$ and for some constants $a, b > 0$, the functions $u$ drawn from \emph{GP}$(\mathbf{0},k(\bpi,\bpi'))$ satisfy:
	\begin{align}
		&\mathbb{P} \left( \sup_{\bpi \in \Pi} \left| \frac{\partial u}{\partial x} \right| > L \right) \leq a e^{- \frac{L^2}{b^2}}\\
		&\mathbb{P} \left( \sup_{\bpi \in \Pi} \left| \frac{\partial u}{\partial y} \right| > L \right) \leq a e^{- \frac{L^2}{b^2}}.
	\end{align}
\end{restatable}
This assumption is standard when using GPs in online optimization settings~\cite{srinivas2009gaussian}, and it is satisfied by many common kernel functions for specific values of $a$ and $b$, such as the squared exponential kernel and the Mat\'ern one with smoothness parameter $\nu > 2$ (see Section~\ref{sec:experiments} for details on the definition of these kernels).

We are now ready to state our main result:
\begin{restatable}{theorem}{thmcont}\label{thm:cont}
	Assume that $u$ is drawn from a \emph{GP}$(\mathbf{0},k(\bpi,\bpi'))$ satisfying Assumption~\ref{ass:smooth}.
	Given $\epsilon > 0$ and $\delta \in (0, 2)$, let $\overline{\bpi} \coloneqq (\bar{x}, \bar{y})\in \Pi$ be a maximin strategy profile for a finite game $\Gamma(K)$ where $K$ is at least $K_\epsilon \coloneqq \left\lceil \frac{b}{2\epsilon} \sqrt{\log \left( \frac{4a}{\delta}\right)} \, \right\rceil + 1$.
	Then, the following holds:
	\begin{equation}
		\mathbb{P} \Big( \left| u(\bpi^\ast) - u(\overline{\bpi}) \right| \leq \epsilon \Big) \geq 1 - \frac{\delta}{2}.
	\end{equation}
\end{restatable}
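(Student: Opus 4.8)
The plan is to bound the difference $\left| u(\bpi^\ast) - u(\overline{\bpi}) \right|$ by relating the maximin value of the full game $\Gamma$ to that of the discretized game $\Gamma(K)$, controlling the error through the Lipschitz-type behavior of $u$ guaranteed by Assumption~\ref{ass:smooth}. First I would introduce the event $\mathcal{E}_L \coloneqq \left\{ \sup_{\bpi \in \Pi} \left| \frac{\partial u}{\partial x} \right| \leq L \right\} \cap \left\{ \sup_{\bpi \in \Pi} \left| \frac{\partial u}{\partial y} \right| \leq L \right\}$, which by a union bound and Assumption~\ref{ass:smooth} has probability at least $1 - 2 a e^{-L^2/b^2}$. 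On $\mathcal{E}_L$, the function $u$ is $L$-Lipschitz in each coordinate, so for any $\bpi = (x,y)$ and any $\bpi' = (x',y')$ we have $|u(\bpi) - u(\bpi')| \leq L(|x-x'| + |y-y'|)$.

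The core step is a sandwich argument on the maximin value. Let $v^\ast \coloneqq u(\bpi^\ast) = \max_{x \in \X} \min_{y \in \Y} u(x,y)$ be the value of $\Gamma$, and let $v_K$ be the value of $\Gamma(K)$, attained at $\overline{\bpi}$. Since $K \geq K_\epsilon$ with $K_\epsilon = \left\lceil \frac{b}{2\epsilon}\sqrt{\log(4a/\delta)} \right\rceil + 1$ equally spaced points on $[0,1]$, the grid has mesh at most $\frac{1}{K_\epsilon - 1} \leq \frac{2\epsilon}{b\sqrt{\log(4a/\delta)}}$, so every point of $\X$ (resp.\ $\Y$) is within this distance of a grid point. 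I would then show both inequalities $v_K \geq v^\ast - \epsilon$ and $v_K \leq v^\ast + \epsilon$ on the event $\mathcal{E}_L$ for the right choice of $L$: for the lower bound, take the grid point $\hat{x}$ nearest to $x^\ast$ and observe that $\min_{y \text{ on grid}} u(\hat x, y) \geq \min_{y \in \Y} u(\hat x, y) \geq \min_{y \in \Y} u(x^\ast, y) - L\cdot(\text{mesh}) = v^\ast - L\cdot(\text{mesh})$, using Lipschitzness in $x$; for the upper bound, a symmetric argument with the inner minimization and Lipschitzness in $y$ shows that discretizing $\Y$ can only raise the inner min by at most $L\cdot(\text{mesh})$, and then restricting the outer max to the grid can only lower the value. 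Combining, $|v_K - v^\ast| \leq L\cdot(\text{mesh})$. Finally — and this is where the definition of the final guess matters — the quantity $u(\overline{\bpi})$ in the statement should be read as $u(\bar x, y^\ast(\bar x))$, the true maximin value of playing $\bar x$ in $\Gamma$; I would argue that this equals $v_K$ up to the same $L\cdot(\text{mesh})$ slack (discretizing the follower's response around $\bar x$), so altogether $|u(\bpi^\ast) - u(\bar x, y^\ast(\bar x))| \leq 2 L\cdot(\text{mesh})$, or with a more careful bookkeeping just $L \cdot (\text{mesh})$ absorbed into the constant.

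To close, I would choose $L$ so that $L \cdot (\text{mesh}) \leq \epsilon$: since the mesh is at most $\frac{2\epsilon}{b\sqrt{\log(4a/\delta)}}$, taking $L = \frac{b}{2}\sqrt{\log(4a/\delta)}$ (matching the factor built into $K_\epsilon$) gives $L\cdot(\text{mesh}) \leq \epsilon$. With this $L$, the failure probability is $\mathbb{P}(\mathcal{E}_L^c) \leq 2 a e^{-L^2/b^2} = 2a e^{-\frac{1}{4}\log(4a/\delta)} = 2a (4a/\delta)^{-1/4}$; one checks this is at most $\delta/2$ for the relevant range $\delta \in (0,2)$ (possibly after adjusting the constant inside $K_\epsilon$), which yields $\mathbb{P}(|u(\bpi^\ast) - u(\overline{\bpi})| \leq \epsilon) \geq 1 - \delta/2$.

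The main obstacle I anticipate is the bookkeeping in the sandwich argument — in particular, correctly tracking which of the two coordinates contributes a discretization error at each of the three stages (outer max over $\X$, inner min over $\Y$, and the follower's best response defining $u(\bar x, y^\ast(\bar x))$), and making sure the constants line up so the $\sqrt{\log(4a/\delta)}$ in $K_\epsilon$ exactly cancels the exponent in Assumption~\ref{ass:smooth} to produce the clean $1 - \delta/2$ bound. The probabilistic part is routine once the deterministic Lipschitz-to-value-gap reduction is pinned down.
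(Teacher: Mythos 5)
Your deterministic reduction is essentially the paper's argument: condition on the event that both partial derivatives are bounded by $L$ (probability at least $1-2ae^{-L^2/b^2}$ by a union bound over Assumption~\ref{ass:smooth}), then sandwich the maximin value of $\Gamma(K)$ between $u(\bpi^\ast)\pm L\cdot(\text{discretization distance})$ by moving $x^\ast$ to its nearest grid point for one direction and discretizing the inner minimization for the other. That part is sound and matches the paper's two-case analysis.

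The genuine gap is in the final bookkeeping, precisely the step you dismiss as routine. You take the approximation distance to be the full mesh $\frac{1}{K_\epsilon-1}\le \frac{2\epsilon}{b\sqrt{\log(4a/\delta)}}$, which forces $L=\frac{b}{2}\sqrt{\log(4a/\delta)}$, and then the failure probability is $2ae^{-L^2/b^2}=2a\left(\frac{\delta}{4a}\right)^{1/4}$. This is \emph{not} at most $\delta/2$: the inequality $2a(\delta/4a)^{1/4}\le \delta/2$ is equivalent to $4a\le\delta$, which fails for every interesting $\delta$ (and for the kernels covered by Assumption~\ref{ass:smooth}). Nor can you ``adjust the constant inside $K_\epsilon$,'' since $K_\epsilon$ is fixed by the statement. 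The fix, which is what the paper uses, is that any point of $[0,1]$ is within \emph{half} a mesh of a grid point, i.e.\ within $\frac{1}{2(K_\epsilon-1)}\le\frac{\epsilon}{b\sqrt{\log(4a/\delta)}}$; this lets you take the full $L=b\sqrt{\log(4a/\delta)}$, for which $L\cdot\frac{1}{2(K_\epsilon-1)}\le\epsilon$ and $2ae^{-L^2/b^2}=2a\cdot\frac{\delta}{4a}=\frac{\delta}{2}$ exactly. A secondary issue: the statement defines $\overline{\bpi}=(\bar x,\bar y)$ as a maximin profile of $\Gamma(K)$, so $u(\overline{\bpi})$ \emph{is} the discretized game's value; re-reading it as $u(\bar x,y^\ast(\bar x))$ (as you propose) is not what is claimed here and, as you note, costs an extra $L\cdot(\text{distance})$ term that the already-tight constants cannot absorb — that reinterpretation belongs to the corollaries, not to this theorem.
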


The following two results rely on Theorem~\ref{thm:cont} to show that the M-GP-LUCB (Algorithm~\ref{alg:m_gplucb}) and the GP-SE (Algorithm~\ref{alg:gp_se}) algorithms can be employed to find, with high confidence, $\epsilon$-maximin strategy profiles in SBGs with infinite strategy spaces.
Let us remark that, while for SBGs with finite strategy spaces our theoretical analysis is performed for $\epsilon = 0$, in the case of infinite strategy spaces it is necessary to assume a nonzero approximation level $\epsilon$.

\begin{restatable}{corollary}{corconfidence}\label{cor:cont_conf}
	Assume that $u$ is drawn from a \emph{GP}$(\mathbf{0},k(\bpi,\bpi'))$ satisfying Assumption~\ref{ass:smooth}.
	Given $\epsilon > 0$ and $\delta \in (0,1)$, letting $b_t := 2 \log \left( \frac{n m \pi^2 t^2}{3 \delta} \right)$, the M-GP-LUCB algorithm applied to $\Gamma(K)$ with $K$ at least $K_\epsilon \coloneqq \left\lceil \frac{b}{2\epsilon} \sqrt{\log \left( \frac{4a}{\delta}\right)} \, \right\rceil + 1$ returns a strategy profile $\overline{\bpi} \coloneqq (\bar{x}, \bar{y})$ such that $\mathbb{P} \left( |u(\bpi^\ast) - u(\bar{x}, y^\ast(\bar{x}))| \leq \epsilon \right) \geq  1- \delta$.
	Moreover, the algorithm stops its execution after at most:
	\begin{align}
		T_{\delta, \epsilon} \leq 64 \, H^*(u) \, \lambda \left[ \log \left( 64 \, H^*(u)\, \lambda \, \pi \, K_\epsilon \, \sqrt{\frac{1}{3 \delta}} \right) + \right. \nonumber\\
	\left. + 2 \, \log \left( \log \left( 64 \, H^*(u) \lambda \, \pi \, K_\epsilon \sqrt{\frac{1}{3 \delta}} \right) \right) \right],
	\end{align}
	where we require that $64 \, \lambda \, \pi \, K_\epsilon \, \sqrt{\frac{1}{3 \delta}} > 4.85$.
\end{restatable}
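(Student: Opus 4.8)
The plan is to run \textsc{M-GP-LUCB} on the discretized finite game $\Gamma(K)$ (with its own parameter $\epsilon = 0$, so that Theorem~\ref{thm:fixed_confidence} applies verbatim and it returns an \emph{exact} maximin profile of $\Gamma(K)$ with high probability) and to split the target failure probability $\delta$ into two equal parts: one half controlling the error of \textsc{M-GP-LUCB} on $\Gamma(K)$, via Theorem~\ref{thm:fixed_confidence}, and the other half controlling the discretization gap between $\Gamma(K)$ and the infinite game $\Gamma$, via Theorem~\ref{thm:cont}.

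First I would invoke Theorem~\ref{thm:fixed_confidence} applied to $\Gamma(K)$, which has $n = m = K$ strategy profiles, with its confidence parameter set to $\delta/2$. Its prescribed exploration term then reads $b_t = 2\log\big(nm\pi^2 t^2/(6(\delta/2))\big) = 2\log\big(nm\pi^2 t^2/(3\delta)\big)$, which is exactly the $b_t$ in the statement. On an event $E_{\mathrm{alg}}$ of probability at least $1-\delta/2$, Theorem~\ref{thm:fixed_confidence} gives us that (i) the algorithm halts and returns an exact maximin profile $\overline{\bpi} = (\bar x, \bar y)$ of $\Gamma(K)$, and (ii) it stops after at most the number of rounds obtained by substituting $\delta \mapsto \delta/2$ and $n = m = K_\epsilon$ into~\eqref{eq:fixed_conf_final}; since $\sqrt{K_\epsilon^2/(6(\delta/2))} = K_\epsilon\sqrt{1/(3\delta)}$, this substitution reproduces the stated expression for $T_{\delta,\epsilon}$ and the side condition $64\lambda\pi K_\epsilon\sqrt{1/(3\delta)} > 4.85$. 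Here I take $K = K_\epsilon$, the smallest admissible grid, so that $H^*(u)$ and the cardinalities appearing in the bound refer to $\Gamma(K_\epsilon)$.

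Second I would invoke Theorem~\ref{thm:cont} with parameters $\epsilon$ and $\delta$ (admissible since $\delta \in (0,1) \subset (0,2)$). Because the grid has $K \ge K_\epsilon = \lceil \tfrac{b}{2\epsilon}\sqrt{\log(4a/\delta)}\rceil + 1$ points, exactly the threshold required by that theorem, we obtain an event $E_{\mathrm{disc}}$, depending only on the sample path of $u$ through the gradient-tail bounds of Assumption~\ref{ass:smooth}, of probability at least $1-\delta/2$, on which every maximin profile of $\Gamma(K)$ is an $\epsilon$-maximin profile of $\Gamma$, that is, satisfies $|u(\bpi^\ast) - u(\bar x, y^\ast(\bar x))| \le \epsilon$.

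Finally I would take a union bound. Since $\mathbb{P}(E_{\mathrm{alg}}^c) \le \delta/2$ and $\mathbb{P}(E_{\mathrm{disc}}^c) \le \delta/2$, the event $E_{\mathrm{alg}} \cap E_{\mathrm{disc}}$ has probability at least $1-\delta$; on it, $\overline{\bpi}$ is a maximin profile of $\Gamma(K)$ by $E_{\mathrm{alg}}$, hence an $\epsilon$-maximin profile of $\Gamma$ by $E_{\mathrm{disc}}$, which is the claimed $\delta$-PAC guarantee~\eqref{eq:delta_pac}, and the round bound from part (ii) holds there as well. The argument is essentially bookkeeping layered on top of Theorems~\ref{thm:fixed_confidence} and~\ref{thm:cont}; the only points requiring real care are the allocation of $\delta/2$ to each source of error (which is precisely what pins down the constant $3\delta$ inside $b_t$ and inside $T_{\delta,\epsilon}$) and the observation that $E_{\mathrm{alg}}$ and $E_{\mathrm{disc}}$ live on the same probability space — the GP draw together with the simulator noise — so that the union bound is legitimate even though $E_{\mathrm{disc}}$ is noise-independent. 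I do not anticipate a substantive obstacle beyond this, since the nontrivial content is already packaged in the two cited results.
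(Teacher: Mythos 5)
Your proposal is correct and matches the paper's own proof: Theorem~\ref{thm:fixed_confidence} is invoked on $\Gamma(K_\epsilon)$ with confidence $\delta/2$ (which is exactly what turns the $6\delta$ in the prescribed $b_t$ into $3\delta$ and yields the stated round bound with $K_\epsilon\sqrt{1/(3\delta)}$), Theorem~\ref{thm:cont} handles the discretization error with the remaining $\delta/2$, and a union bound combines the two. The paper additionally cites Lemma~12 of Kaufmann et al.\ to re-derive the explicit bound, but this is the same computation you perform by direct substitution into~\eqref{eq:fixed_conf_final}.
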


\begin{restatable}{corollary}{corse}\label{cor:cont_budget}
	Assume that $u$ is drawn from a GP$(\mathbf{0}, k(\bpi,\bpi'))$ satisfying Assumption~\ref{ass:smooth}.
	Given $\epsilon > 0$ and $\delta \in (0,1)$, letting $T$ be the number of available rounds, the GP-SE algorithm applied to $\Gamma(K)$ with $K$ at least $K_\epsilon \coloneqq \left\lceil \frac{b}{2\epsilon} \sqrt{\log \left( \frac{4a}{\delta}\right)} \, \right\rceil + 1$ returns a profile $\overline{\bpi} \coloneqq (\bar{x}, \bar{y})$ such that $\mathbb{P} \left( |u(\bpi^\ast) - u(\bar{x}, y^\ast(\bar{x}))| > \epsilon \right) < \delta_{T, \epsilon}$, where:
	\begin{equation}
		\delta_{T,\epsilon} \coloneqq 4 K^2_\epsilon (K_\epsilon - 1) e^{-\frac{T - K^2_\epsilon}{8 \lambda \overline{\log}(K^2_\epsilon) H_2}} + 2 a e^{-\frac{b^2}{4 \epsilon^2 (K_\epsilon - 1)^2}}.
	\end{equation}
\end{restatable}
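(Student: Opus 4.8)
The plan is to write the failure event ``the returned $\overline{\bpi}=(\bar x,\bar y)$ is not an $\epsilon$-maximin profile of $\Gamma$'' as the union of an \emph{estimation} failure and a \emph{discretization} failure, to bound the two probabilities separately, and to union-bound. Estimation failure means that, when run on the finite game $\Gamma(K_\epsilon)$, GP-SE does not return a maximin profile of $\Gamma(K_\epsilon)$ from the noisy queries; this is controlled by Theorem~\ref{thm:fixed_budget}. Discretization failure means that a maximin strategy of $\Gamma(K_\epsilon)$ has utility more than $\epsilon$ away from $u(\bpi^\ast)$; this is controlled by a rerun of the argument behind Theorem~\ref{thm:cont}.

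First I would instantiate GP-SE on $\Gamma(K_\epsilon)$, which has $n=m=K_\epsilon$, hence $P=K_\epsilon^2$ strategy profiles; the phase lengths in Algorithm~\ref{alg:gp_se} are defined so that $\sum_{p=1}^{P-1}T_p+T_{P-1}\le T$, so the budget is respected by construction. Applying Theorem~\ref{thm:fixed_budget} verbatim to $\Gamma(K_\epsilon)$ --- which, once the sample path $u$ is fixed, is an ordinary finite SBG --- gives that, conditionally on $u$, with probability at least $1-\delta^{\mathrm{fin}}_T$ over the simulator noise GP-SE outputs a maximin profile of $\Gamma(K_\epsilon)$, where $\delta^{\mathrm{fin}}_T=2P(n+m-2)\,e^{-(T-P)/(8\lambda\,\overline{\log}(P)\,H_2)}=4K_\epsilon^2(K_\epsilon-1)\,e^{-(T-K_\epsilon^2)/(8\lambda\,\overline{\log}(K_\epsilon^2)\,H_2)}$, i.e., exactly the first summand of $\delta_{T,\epsilon}$, with $H_2$ the hardness parameter of the realized $\Gamma(K_\epsilon)$.

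Second, I would bound the discretization failure on the event that GP-SE succeeds. There $\bar x$ is the first player's maximin strategy of $\Gamma(K_\epsilon)$ and $\bar y\in\argmin_{y\in\Y_{K_\epsilon}}u(\bar x,y)$, so $u(\overline{\bpi})$ equals the value of $\Gamma(K_\epsilon)$. By Assumption~\ref{ass:smooth} and a union bound over $\partial u/\partial x$ and $\partial u/\partial y$, with probability at least $1-2a\,e^{-L^2/b^2}$ the path $u$ is $L$-Lipschitz in each coordinate over $\Pi$; on that event, since every point of $[0,1]$ lies within $\tfrac{1}{2(K_\epsilon-1)}$ of a grid point, the same $\min$/$\max$--interchange estimates used to prove Theorem~\ref{thm:cont}, now applied to the particular strategy $\bar x$, bound $|u(\bpi^\ast)-u(\overline{\bpi})|$ and $|u(\overline{\bpi})-u(\bar x,y^\ast(\bar x))|$ by a fixed multiple of $L/(K_\epsilon-1)$, hence $|u(\bpi^\ast)-u(\bar x,y^\ast(\bar x))|\le\epsilon$ once $L$ is taken proportional to $\epsilon(K_\epsilon-1)$; with this choice the non-smoothness probability is $2a\,e^{-\Theta(\epsilon^2(K_\epsilon-1)^2/b^2)}$, the second summand of $\delta_{T,\epsilon}$, and the choice $K_\epsilon=\lceil\tfrac{b}{2\epsilon}\sqrt{\log(4a/\delta)}\rceil+1$ is exactly what keeps this summand below $\delta/2$, consistently with Theorem~\ref{thm:cont}.

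Finally, union-bounding the two failure events and then taking expectation over the GP draw of $u$ yields $\mathbb{P}(|u(\bpi^\ast)-u(\bar x,y^\ast(\bar x))|>\epsilon)<\delta^{\mathrm{fin}}_T+2a\,e^{-\Theta(\cdot)}=\delta_{T,\epsilon}$, as claimed. The step I expect to be the main obstacle is making the two randomness sources fit together cleanly: Theorem~\ref{thm:fixed_budget} is a bound over the noise for a \emph{fixed} $u$ with a $u$-dependent $H_2$, whereas the discretization bound lives on the random draw of $u$; the safe route is to condition on $u$, invoke Theorem~\ref{thm:fixed_budget} and the (now deterministic) discretization estimate on the sub-event where $u$ is $L$-Lipschitz, and only afterwards integrate over $u$, so that the $H_2$ in the final bound is unambiguously the hardness of the realized $\Gamma(K_\epsilon)$. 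A lesser nuisance is that Theorem~\ref{thm:cont} compares $u(\bpi^\ast)$ with $u(\overline{\bpi})$ at the \emph{grid} best response $\bar y$, whereas the target here is the value at the \emph{true} best response $y^\ast(\bar x)$, which is why the short Lipschitz argument has to be rerun rather than the theorem quoted as a black box.
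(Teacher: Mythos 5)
Correct, and essentially the paper's own proof: the paper likewise obtains the bound as a union of Theorem~\ref{thm:fixed_budget} applied to the discretized game $\Gamma(K_\epsilon)$ with $n=m=K_\epsilon$ and $P=K_\epsilon^2$ (giving the first summand) and the discretization guarantee of Theorem~\ref{thm:cont}, with the defining relation of $K_\epsilon$ inverted so as to rewrite the $\delta/2$ discretization failure as the second summand. Your two extra precautions---conditioning on the GP draw before invoking the fixed-budget theorem, and rerunning the Lipschitz argument so the comparison is against $u(\bar{x},y^\ast(\bar{x}))$ rather than $u(\overline{\bpi})$ at the grid best response---refine details the paper's two-line proof silently skips, and your exponent of order $\epsilon^2(K_\epsilon-1)^2/b^2$ is in fact what the inversion yields (the fraction inside the exponential of the stated $\delta_{T,\epsilon}$ appears to be inadvertently reciprocated in the paper).
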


In the result of Corollary~\ref{cor:cont_budget}, the discretization parameter $K_\epsilon$ depends on a confidence level $\delta$ that has to be chosen in advance.
Another possibility is to try to minimize the overall confidence $\delta_{T, \epsilon}$ by appropriately tuning the parameter $\delta$.
Formally, a valid confidence level can be defined as follows:
\begin{equation}
	\delta_{\textnormal{opt}} \coloneqq \inf \left\{ \delta \in (0,1) \ : \  \delta_{T,\epsilon} \right\},
\end{equation}
noticing that $\delta_{T, \epsilon}$ depends on $\delta$ also through the term $K_\epsilon$.
Unfortunately, this minimization problem does not admit a closed-form optimal solution.
Nevertheless, we can compute an (approximate) optimal value for $\delta$ by employing numerical methods~\cite{nocedal2006numerical}.

\begin{table*}[th!]
	\caption{Experimental results of algorithms M-LUCB, M-G-LUCB, and M-GP-LUCB on SBGs with finite strategy spaces.}
	\label{tab:ressqua}
	\centering
	\begin{tabular}{ccc|ccc|ccc|ccc|c}
		&                                               &      & \multicolumn{3}{c|}{M-LUCB} & \multicolumn{3}{c|}{M-G-LUCB} & \multicolumn{3}{c|}{M-GP-LUCB} & GP-SE \\ \cline{4-13} 
		\multicolumn{1}{l}{}                            & \multicolumn{1}{l|}{}                         & $T$    & $T_\delta$  & $\%end$ & $\%opt$ & $T_\delta$  & $\%end$  & $\%opt$ & $T_\delta$  & $\%end$  & $\%opt$  & $\%opt$ \\ \hline
		\multicolumn{1}{c|}{\multirow{3}{*}{SQE}}       & \multicolumn{1}{c|}{$l = 0.1$}                & 30k  & 10673.86    & 53.33 & 87.13 & ~~227.23      & \hspace{0.18cm}96.70   & 86.73 & \hspace{0.18cm}229.19      & \hspace{0.18cm}93.33  & \hspace{0.18cm}86.66  & 100.00   \\ \cline{2-13} 
		\multicolumn{1}{c|}{}                           & \multicolumn{1}{c|}{\multirow{2}{*}{$l = 2.0$}} & 30k  & 23788.96    & 13.73 & 84.80  & 2460.19     & \hspace{0.18cm}89.23   & 77.73 & 2020.89 & \hspace{0.18cm}76.66  & \hspace{0.18cm}93.36  & \hspace{0.18cm}93.23 \\
		\multicolumn{1}{c|}{}                           & \multicolumn{1}{c|}{}                         & 100k & 42103.86    & 46.66 & 88.63 & 3535.00        & \hspace{0.18cm}91.86   & 78.56 & 5656.77      & \hspace{0.18cm}76.77  & \hspace{0.18cm}93.30   & \hspace{0.18cm}96.60  \\ \hline
		\multicolumn{1}{c|}{\multirow{3}{*}{$M_{1.5}$}} & \multicolumn{1}{c|}{$l = 0.1$}                & 30k  & 13869.59    & 56.06 & 66.66 & \hspace{0.18cm}222.06      & 100.00  & 66.83 & \hspace{0.18cm}224.87      & 100.00    & \hspace{0.18cm}66.66  & 100.00   \\ \cline{2-13} 
		\multicolumn{1}{c|}{}                           & \multicolumn{1}{c|}{\multirow{2}{*}{$l = 2.0$}} & 30k  & 18978.75    & 33.33 & 76.26 & 2532.98     & \hspace{0.18cm}91.30   & 76.90 & 3775.65     & \hspace{0.18cm}88.03  & \hspace{0.18cm}78.86  & \hspace{0.18cm}98.30  \\
		\multicolumn{1}{c|}{}                           & \multicolumn{1}{c|}{}                         & 100k & 28798.42    & 50.00 & 80.70  & 3662.00       & \hspace{0.18cm}95.36   & 77.00 & 4618.27	   & \hspace{0.18cm}93.33  & \hspace{0.18cm}79.53  & \hspace{0.18cm}98.76 \\ \hline
		\multicolumn{1}{c|}{\multirow{3}{*}{$M_{2.5}$}} & \multicolumn{1}{c|}{$l = 0.1$}                & 30k  & 13335.26    & 79.80  & 86.66 & \hspace{0.18cm}168.61      & \hspace{0.18cm}97.90   & 86.06 & \hspace{0.18cm}171.62      & \hspace{0.18cm}96.66  & \hspace{0.18cm}86.66  & \hspace{0.18cm}99.83 \\ \cline{2-13} 
		\multicolumn{1}{c|}{}                           & \multicolumn{1}{c|}{\multirow{2}{*}{$l = 2.0$}} & 30k  & 20404.41    & 24.66 & 89.26 & 1984.55     & \hspace{0.18cm}92.10   & 86.63 & 2435.84     & \hspace{0.18cm}88.24  & \hspace{0.18cm}95.27  & \hspace{0.18cm}95.60  \\
		\multicolumn{1}{c|}{}                           & \multicolumn{1}{c|}{}                         & 100k & 49198.82    & 62.06 & 92.86 & 2617.31     & \hspace{0.18cm}93.70   & 87.13 & 2626.89 & \hspace{0.18cm}86.66     & \hspace{0.18cm}94.93    & \hspace{0.18cm}96.46
	\end{tabular}
\end{table*}

\subsection{Arbitrary Discretization}

Whenever using an equally-spaced grid as a discretization scheme is unfeasible, the theoretical results based on Theorem~\ref{thm:cont} do not hold anymore.
Nevertheless, given any finite sets of players' strategies, we can bound with high probability the distance of a maximin profile $\bpi^\ast$ from the strategy profile learned in the resulting (finite) discretized SBG.
Formally, let $\mathcal{X}_n \subseteq \X$ be a finite set of $n$ first player's strategies and, similarly, let $\mathcal{Y}_m \subseteq \Y$ be a finite set of $m$ second player's strategies.
Thus, the resulting finite SBG $\Gamma \coloneqq (\X_n, \Y_m, u)$ has $n \, m$ strategy profiles.
Let
\begin{align*}
	& d_x^{\max} \coloneqq \max_{x \in \mathcal{X}} \min_{x_i \in \mathcal{X}_n} |x - x_i|,\\
	& d_y^{\max} \coloneqq \max_{y \in \mathcal{Y}} \min_{y_i \in \mathcal{Y}_m} |y - y_i|,
\end{align*}
then, we can show the following result.
\begin{restatable}{theorem}{thmsupercazzola}\label{thm:supercazzola}
	Assume that $u$ is drawn from a \emph{GP}$(\mathbf{0},k(\bpi,\bpi'))$ satisfying Assumption~\ref{ass:smooth}.
	Given $\delta \in (0,2)$, let $\overline{\bpi} \coloneqq (\bar{x}, \bar{y}) \in \mathcal{X}_n \times \mathcal{Y}_m$ be a maximin strategy profile for a finite game $\Gamma \coloneqq (\mathcal{X}_n, \mathcal{Y}_m, u)$.
	Then, the following holds:
	\begin{equation*}
	\mathbb{P} \left( \left| u(\bpi^\ast) - u(\overline{\bpi}) \right| \leq b \sqrt{\log \left( \frac{4a}{\delta} \right)} \max \left\{d_x^{\max}, d_y^{\max}\right\}\right) \geq 1 - \frac{\delta}{2}.
	\end{equation*}
\end{restatable}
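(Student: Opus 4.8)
The plan is to follow the same two-step route that (implicitly) underlies Theorem~\ref{thm:cont}, but now keeping track of the actual mesh widths $d_x^{\max}$ and $d_y^{\max}$ instead of the uniform spacing $1/(K-1)$. First I would use Assumption~\ref{ass:smooth} to obtain a high-probability Lipschitz bound on $u$. Setting $L := b\sqrt{\log(4a/\delta)}$ in the two inequalities of Assumption~\ref{ass:smooth}, a union bound gives that, with probability at least $1-2ae^{-L^2/b^2} = 1 - \delta/2$, both partial derivatives of $u$ are bounded by $L$ over all of $\Pi$; call this event $\mathcal{E}$. On $\mathcal{E}$, $u$ is $L$-Lipschitz separately in each coordinate, so for any $\bpi=(x,y)$ and $\bpi'=(x',y')$ in $\Pi$ we have $|u(x,y)-u(x',y')|\le L(|x-x'|+|y-y'|)$, and in particular $|u(x,y)-u(x',y')|\le 2L\max\{|x-x'|,|y-y'|\}$.

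Second, I would transfer the maximin value of $\Gamma$ to that of $\Gamma\coloneqq(\mathcal{X}_n,\mathcal{Y}_m,u)$. Work on $\mathcal{E}$ throughout. For the ``$\le$'' direction: let $x^\ast$ be an exact maximin strategy of the infinite game and pick $x_i\in\mathcal{X}_n$ with $|x^\ast-x_i|\le d_x^{\max}$. For any $y\in\mathcal Y$, choosing $y_j\in\mathcal Y_m$ with $|y-y_j|\le d_y^{\max}$, the Lipschitz bound gives $u(x_i,y_j)\ge u(x^\ast,y) - L(d_x^{\max}+d_y^{\max})$; a symmetric argument on the inner $\min$ (moving from the discrete best response back to a nearby continuous strategy) shows $\min_{y_j\in\mathcal Y_m} u(x_i,y_j) \ge \min_{y\in\mathcal Y} u(x^\ast,y) - L(d_x^{\max}+d_y^{\max})$, whence the discrete maximin value is at least $u(\bpi^\ast) - L(d_x^{\max}+d_y^{\max})$. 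For the ``$\ge$'' direction: given the discrete maximin profile $\overline{\bpi}=(\bar x,\bar y)$, compare $u(\bar x, y^\ast(\bar x))$ — the value $\overline{\bpi}$ actually attains in the continuous game, against a continuous worst-case response — with $\min_{y_j} u(\bar x, y_j)$; approximating $y^\ast(\bar x)$ by a nearby grid point $y_j$ gives $u(\bar x, y^\ast(\bar x)) \ge \min_{y_j} u(\bar x, y_j) - L\,d_y^{\max} \ge u(\bpi^\ast) - L(d_x^{\max}+d_y^{\max})$ after chaining with the previous step, while trivially $u(\bpi^\ast)\ge u(\bar x, y^\ast(\bar x))$ since $\bpi^\ast$ is a true maximin. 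Combining the two directions yields $|u(\bpi^\ast)-u(\overline{\bpi})|\le L(d_x^{\max}+d_y^{\max})\le 2L\max\{d_x^{\max},d_y^{\max}\}$ on $\mathcal{E}$. Since $\mathbb{P}(\mathcal{E})\ge 1-\delta/2$ and $2L = 2b\sqrt{\log(4a/\delta)}$, this is essentially the claimed bound; a slightly more careful bookkeeping of which coordinate is perturbed at each step (only $x$ in some comparisons, only $y$ in others) avoids the spurious factor $2$ and produces exactly $b\sqrt{\log(4a/\delta)}\max\{d_x^{\max},d_y^{\max}\}$.

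I expect the main obstacle to be the bookkeeping in the second step: one has to be careful that $u(\overline{\bpi})$ in the statement means $u(\bar x, y^\ast(\bar x))$ — the value of the discrete maximin strategy evaluated against a \emph{continuous} worst-case follower — and not merely its value inside the discretized game, and similarly that $\bpi^\ast$ is the continuous maximin profile. Getting the inequalities to point the right way in both directions, and in particular arranging that each approximation step perturbs only one of the two coordinates so that the final constant is $\max\{d_x^{\max},d_y^{\max}\}$ rather than $d_x^{\max}+d_y^{\max}$, is the delicate part; the Lipschitz estimate itself and the union bound are routine. The argument is a direct generalization of the proof of Theorem~\ref{thm:cont}, which is the special case $d_x^{\max}=d_y^{\max}=\tfrac{1}{K-1}$ with $K\ge K_\epsilon$ chosen precisely so that $b\sqrt{\log(4a/\delta)}\cdot\tfrac{1}{K-1}\le\epsilon$.
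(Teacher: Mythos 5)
Your sketch follows essentially the same route as the paper: instantiate Assumption~\ref{ass:smooth} with $L = b\sqrt{\log(4a/\delta)}$ and a union bound to get an event of probability at least $1-\delta/2$ on which $u$ is coordinate-wise $L$-Lipschitz, then compare $u(\bpi^\ast)$ with the value of the discrete maximin profile via nearest grid points, perturbing one coordinate per direction. Two caveats, though, and they concern exactly the point you single out as delicate. First, your reading of $u(\overline{\bpi})$ is the opposite of the paper's: in the statement $\bar y \in \mathcal{Y}_m$ is the best response to $\bar x$ \emph{inside} the discretized game, so $u(\overline{\bpi})$ is the discrete maximin value $\max_{x_g\in\mathcal{X}_n}\min_{y_g\in\mathcal{Y}_m}u(x_g,y_g)$, not $u(\bar x, y^\ast(\bar x))$; the paper's proof relies on this, using $u(\overline{\bpi}) \ge u(x_g, y_g^\ast(x_g))$ for the grid point $x_g$ nearest $x^\ast$, where $y_g^\ast(x_g)$ is the best response within $\mathcal{Y}_m$. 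Second, and relatedly, your claim that ``more careful bookkeeping'' removes the factor $2$ is true only under that literal reading: for $u(\overline{\bpi}) = u(\bar x,\bar y)$ one direction needs perturbing only $x$ (write $u(\bpi^\ast)-u(\overline{\bpi})$ through $u(x^\ast,y_g^\ast(x_g))$ and $u(x_g,y_g^\ast(x_g))$; the first and last increments are nonpositive, leaving $L d_x^{\max}$), and the other needs perturbing only $y$ (for each $x_g$, $u(\bpi^\ast)\ge\min_{y\in\Y}u(x_g,y)\ge\min_{y_j\in\mathcal{Y}_m}u(x_g,y_j)-L d_y^{\max}$, then maximize over $x_g$), which is precisely the paper's two cases and gives $L\max\{d_x^{\max},d_y^{\max}\}$. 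If instead you target $|u(\bpi^\ast)-u(\bar x, y^\ast(\bar x))|$ as you propose, the natural bound is the sum $L(d_x^{\max}+d_y^{\max})$, and the claimed $\max$ constant does not follow from the suggested refinement. So the proposal is sound for the theorem as stated once you drop the reinterpretation and carry out the single-coordinate perturbations; as written, it stops one step short of the stated constant.
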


Let us remark that the result in Theorem~\ref{thm:supercazzola} can be applied any time using an equally-spaced grid as a discretization scheme is unfeasible, as it is the case, \emph{e.g.}, when the dimensionality $d$ of the players' strategy spaces is too large.

\section{Experimental Results}\label{sec:experiments}

We experimentally evaluate our algorithms on both finite and infinite SBGs.
As for the finite case, we compare the performances (with different metrics) of our M-GP-LUCB and GP-SE algorithms against two baselines.
The first one is the M-LUCB algorithm proposed by~\citet{garivier2016maximin}, which is the state of the art for learning maximin strategies in finite SBGs and can be easily adapted to our setting by using a different exploration term $b_t$.~\footnote{Since our utilities are not in $[0,1]$ (as they are drawn from a Gaussian instead of a Bernoulli), we  multiply the $b_t$ provided in~\cite{garivier2016maximin} by the utility range.} 
We introduce a second baseline to empirically evaluate how our algorithms speed up their convergence by leveraging correlation of the utilities.  
Specifically, it is a variation of our M-GP-LUCB algorithm (called M-G-LUCB) where utility values are assumed drawn from independent Gaussian random variables, instead of a GP.~\footnote{The formulas for updating the mean $\mu_t$ and the variance $\sigma^2_t$ of the posterior distribution are changed accordingly.}
As for SBGs with infinite strategy spaces, there are no state-of-the-art techniques that we can use as a baseline for comparison.
Thus, we show the quality (in terms of $\epsilon$) of the strategy profiles returned by our algorithms using different values of $K_\epsilon$ for the discretized games.
The average $\epsilon$ values obtained empirically (called $\hat{\epsilon}$ thereafter) are compared against the theoretical values prescribed by Theorem~\ref{thm:cont} (for the given $K_\epsilon$), so as to evaluate whether our bounds are strict or not.

\subsection{Random Game Instances}
As for finite SBGs, we test the algorithms on random instances generated by sampling from GP$(\mathbf{0}, k(\bpi, \bpi'))$, using the following two commonly used kernel functions (see~\cite{williams2006gaussian} for more details):
\begin{itemize}
	\item \emph{squared exponential}:
	$k(\bpi,\bpi') \coloneqq e^{- \frac{1}{2 l^2} || \bpi - \bpi' ||^2},$
	where $l$ is a length-scale parameter;
	\item \emph{Mat\'ern}:
	$k(\bpi,\bpi') \coloneqq \frac{2^{1 - \nu}}{G(\nu)} r^\nu B_\nu(r),$
	where $r \coloneqq \frac{\sqrt{2 \nu}}{l} || \bpi - \bpi' ||$, $\nu$ controls the smoothness of the functions, $l$ is a length-scale parameter, $B_\nu$ is the second-kind Bessel function, and $G$ is the Gamma function.
\end{itemize}
We set the kernel parameters to $l \in \{0.1, 2\}$ and $\nu \in \{1.5, 2.5\}$, generating $30$ instances for each possible combination of kernel function and parameter values.
As for SBGs with infinite strategy spaces, we test on instances generated from distributions with $l = 0.1$ and, with the Mat\'ern kernel, $\nu \in \{1.5, 2.5\}$.
The infinite strategy spaces are approximated with a discretization scheme based on a grid made of $100$ equally-spaced points.

In the fixed-confidence setting, we let $\delta = 0.1$ and stop the algorithms after $T \in \{30\textnormal{k}, 100\textnormal{k}\}$ rounds.
Similarly, the GP-SE algorithm is run with a budget $T \in \{30\textnormal{k}, 100\textnormal{k}\}$.
For each possible combination of algorithm, game instance, and round-limit $T$, we average the results over $100$ runs.

\paragraph{Results on Finite SBGs}
The results are reported in Table~\ref{tab:ressqua}, where $T_\delta$ is the average number of queries used by the algorithm in the runs not exceeding the round-limit $T$, $\%end$ is the percentage of runs the algorithm terminates before $T$ rounds, and $\%opt$ is the percentage of runs the algorithm is able to correctly identify the maximin profile $\bpi^*$.
Notice that M-GP-LUCB and M-G-LUCB clearly outperform M-LUCB, as the latter requires a number of rounds $T_\delta$ an order of magnitude larger.
M-GP-LUCB and M-G-LUCB provide similar performances in terms of $T_\delta$, but the former identifies the maximin profile more frequently than the latter.
While always using the maximum number of rounds $T$, GP-SE is the best algorithm in identifying the maximin profile.

\paragraph{Results on Infinite SBGs}
Figure~\ref{fig:cont} provides the values of $\epsilon$ and $\hat{\epsilon}$ for an instance generated from a Mat\'ern kernel with $\nu = 2.5$ (see Appendix~E for more results).
%
In all the instances, $\hat{\epsilon}$ is lower than $\epsilon$, empirically proving the correctness of the guarantees provided in Section~\ref{sec:continuous}.
Moreover, as expected, $\hat{\epsilon}$ decreases as the number of discretization points $K_\epsilon$ increases.

\subsection{Security Game Instances}

We also test on a SBG instance with infinite strategy spaces inspired by the real-world security game setting described in Section~\ref{sec:intro}.
This game models a military scenario in which a terrestrial counter-air defensive unit has to fire a heat-seeking missile to an approaching enemy airplane, which, after the missile has been launched, can deploy an obfuscating flare so as to try to deflect it.
We call this game \emph{Hit-the-Spitfire}.
The model underlying such game and the parameters used in the experiment are depicted in Figure~\ref{fig:gameinstance}, where $h_\perp$ is the distance between the airplane and the terrestrial unit, $h_f$ is the distance of the flare from the plane, $v_a$ and $v_d$ are the speed of the missile and the plane, respectively, while $\ell$ is the length of the plane, with the flare covering half of this space ($\frac{\ell}{2}$).
The first player (the counter-air defensive unit) can determine the angle $\theta \in [0, 1]$ (radians) at which the missile is launched, while the second player (the airplane) has to decide the position $s \in [0, s_{\max}]$ where to release the flare.
If the missile hits the plane, then it incurs damage $d \in \mathbb{R}^+$ that depends on the hitting point (the nearer to the center of the plane, the higher).
If the missile hits the flare, then there is some probability that it is deflected away from the airplane, otherwise, the missile still hits the target.
The probability of deflection is large when the distance of the airplane from the deployed flare is larger.~\footnote{We provide the complete description of the setting in Appendix~D.}
We run the M-GP-LUCB with $\delta=0.1$.

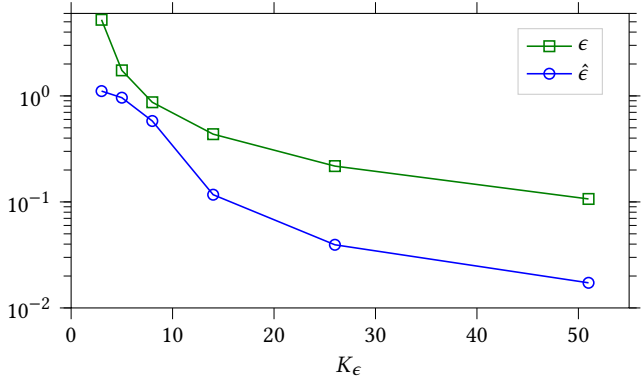
\begin{figure}[t!]
	\begin{tikzpicture}

\definecolor{color0}{rgb}{0.75,0.75,0}

\begin{axis}[
width=9cm, height=5.5cm,
legend cell align={left},
legend style={at={(0.80,0.95)}, anchor=north west, draw=white!80.0!black},
tick align=outside,
tick pos=both,
x grid style={lightgray!92.02614379084967!black},
xlabel={$K_\epsilon$},
xmin=0, xmax=55,
xtick style={color=black},
y grid style={lightgray!92.02614379084967!black},
ymin=0.01, ymax=6,
ytick style={color=black},
ymode=log
]

\addplot [semithick, green!50.0!black, mark=square]
table {%
3	5.226517923
5	1.742172641
8	0.871086321
14	0.43554316
26	0.21777158
51	0.106663631	
};
\addlegendentry{$\epsilon$}
\addplot [semithick, blue, mark=o]
table {%
3	1.110273066
5	0.962151103
8	0.579804642
14	0.116961069
26	0.039421522
51	0.017258231
};
\addlegendentry{$\hat{\epsilon}$}

\end{axis}

\end{tikzpicture}
	\caption{$\epsilon$ vs. $\hat{\epsilon}$ for different values of $K_\epsilon$ (Mat\'ern kernel with smoothness parameter $\nu = 2.5$).}
	\label{fig:cont}
\end{figure}

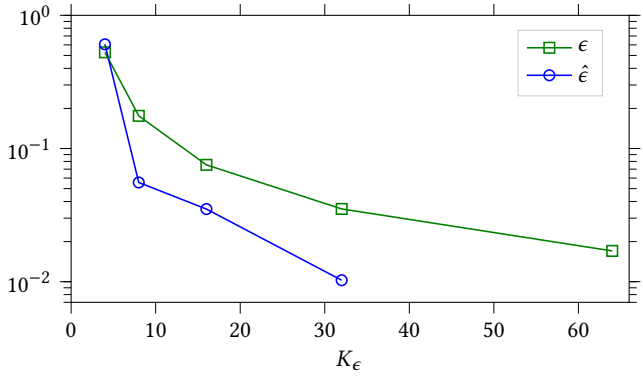
\begin{figure}[t!]
	\begin{tikzpicture}

\definecolor{color0}{rgb}{0.75,0.75,0}

\begin{axis}[
width=9cm, height=5.4cm,
legend cell align={left},
legend style={at={(0.8,0.95)}, anchor=north west, draw=white!80.0!black},
tick align=outside,
tick pos=both,
x grid style={lightgray!92.02614379084967!black},
xlabel={$K_\epsilon$},
xmin=0, xmax=66,
xtick style={color=black},
y grid style={lightgray!92.02614379084967!black},
ymin=0.007,
ymax=1,
ytick style={color=black},
ymode=log
]

\addplot [semithick, green!50.0!black, mark=square]
table {%
4	0.527812353
8	0.175937451
16	0.075401765
32	0.035195889
64	0.017030269
};
\addlegendentry{$\epsilon$}
\addplot [semithick, blue, mark=o]
table {%
4	0.605900626
8	0.05560656
16	0.035119471
32	0.010256103
};
\addlegendentry{$\hat{\epsilon}$}

\end{axis}

\end{tikzpicture}
	\caption{$\epsilon$ vs. $\hat{\epsilon}$ for different values of $K_\epsilon$ (\emph{Hit-the-Spitfire} security game).}
	\label{fig:missile}
\end{figure}

\begin{figure}[t!]
	\includegraphics[width=0.4 \textwidth]{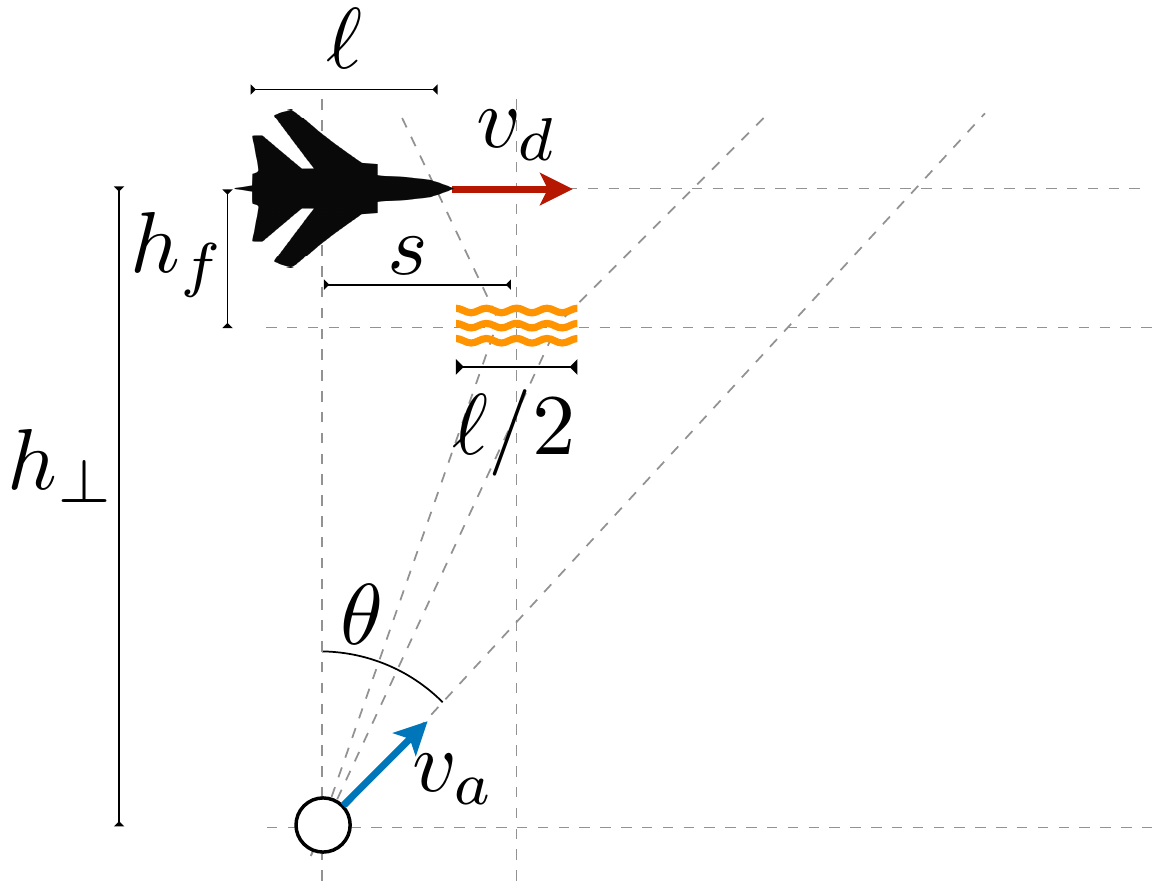}
	\put (-357, 50){\makebox[0.7\textwidth][r]{
			\small
			\begin{tabular}{|c|c|}
				\hline
				Parameter & Value\\
				\hline
				$h_\perp$& $100$ m \\
				\hline
				$h_f$ & $10$ m \\
				\hline
				$v_a$ & $500$ m/s \\
				\hline
				$v_d$ & $120$ m/s \\
				\hline
				$\ell$ & $15$ m \\
				\hline
	\end{tabular}}}
	\caption{\emph{Hit-the-Spitfire} security game instance and values for its parameters used in the experiments.}
	\label{fig:gameinstance}
\end{figure}

\paragraph{Results}
Figure~\ref{fig:missile} reports the results of running the M-GP-LUCB algorithm with $\delta=0.1$ on the \emph{Hit-the-Spitfire} game (performing $100$ runs for each $K_\epsilon$).
Notice that, in most of the cases, $\hat{\epsilon}$ is lower than the theoretical value $\epsilon$.
This is unexpected, since, in this setting, the assumption that the utility function $u$ is drawn from a GP does not hold.
We remark that, in all the runs, M-GP-LUCB is able to identify the maximin strategy profile over the given grid.

\section{Discussion and Future Works}

We addressed the problem of learning \emph{maximin} strategies in two-player zero-sum SBGs with \emph{infinite strategy spaces}, providing algorithms with theoretical guarantees.
To the best of our knowledge, we provided the first learning algorithms for infinite SBGs enjoying $\delta$-PAC theoretical guarantees on the quality of the returned solutions.
This significantly advances the current state of the art for SBGs, as dealing with infinite strategies paves the way to the application of such models in complex real-world settings.
The fundamental ingredient of our results is the assumption that the utility functions are drawn from a GP, which allows us to encode function regularities without relying on specific parametric assumptions, such as, \emph{e.g.}, linearity. 
%
%

In future, we will extend our work along different directions.
For instance, we may address the case of general (\emph{i.e.}, non-zero-sum and with more than two players) SBGs with finite (or even infinite) strategy spaces, where one seeks for an (approximate) Nash equilibrium.
Along this line, an interesting question is how to generalize our learning algorithms based on best arm identification techniques to deal with Nash-equilibrium conditions instead of maximin ones.
This would pave the way to the application of our techniques to other interesting problems, such as multi-agent evaluation by means of meta-games~\cite{tuyls2018generalised,rowland2019multiagent}.
Another interesting direction for future works is to study how to apply our techniques in empirical mechanism design problems~\cite{viqueira2019empirical}.

\section*{Acknowledgments}

This work has been partially supported by the Italian MIUR PRIN 2017 Project ALGADIMAR ``Algorithms, Games, and Digital Market''.

\clearpage
\bibliographystyle{ACM-Reference-Format}
\bibliography{biblio}

\clearpage
\appendix
\onecolumn

\section{Omitted Proofs for the Fixed-Confidence Setting}\label{sec:appendix_fixed_confidence}

We provide the complete proof of Theorem~\ref{thm:fixed_confidence}, which shows that the M-GP-LUCB algorithm (Algorithm~\ref{alg:m_gplucb}) proposed in the fixed-confidence setting is $\delta$-PAC and stops its execution after at most $T_\delta$ rounds, as a function of the confidence level $\delta$.

Let us start by recalling a result needed for the proof of Theorem~\ref{thm:fixed_confidence} and provided in~\cite[Lemma~5.1]{srinivas2009gaussian}, which we formally state using our notation in the following. 

\begin{lemma}\label{lem:delta}
	Given $\delta \in (0,1)$, let $b_t \coloneqq 2 \log \left( \frac{n m \pi^2 t^2}{6 \delta} \right)$. Then,
	\begin{equation}
	\mathbb{P} \left( \forall \bpi \in \Pi \quad \forall t \geq 1 \quad |u(\bpi) - \mu_t(\bpi)| \leq \sqrt{b_t} \sigma_t(\bpi) \right) > 1 - \delta.
	\end{equation}
\end{lemma}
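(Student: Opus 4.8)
The plan is to prove Lemma~\ref{lem:delta} as a direct consequence of the Gaussian posterior structure together with a standard Gaussian tail bound and a union bound over strategy profiles and rounds. Let me lay out the key observations first, then the bound.

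\textbf{Setup and key observation.} Under the prior $u \sim \textnormal{GP}(\mathbf{0}, k)$ and the Gaussian noise model $\tilde{u}_t = u(\bpi_t) + e_t$ with $e_t \sim \mathcal{N}(0,\lambda)$, the conditional distribution of $u(\bpi)$ given the observations $\tilde{\mathbf{u}}_t$ collected up to round $t$ is exactly $\mathcal{N}(\mu_t(\bpi), \sigma_t^2(\bpi))$, where $\mu_t$ and $\sigma_t^2$ are the posterior quantities defined in Equations~\eqref{eq:gp_posterior_1}--\eqref{eq:gp_posterior_3}. Hence, for any fixed $\bpi \in \Pi$ and any fixed $t \geq 1$, the normalized quantity $\big(u(\bpi) - \mu_t(\bpi)\big)/\sigma_t(\bpi)$ is a standard normal random variable conditionally on the history. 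The main obstacle is that the querying rule makes $\bpi_1, \ldots, \bpi_t$ (and hence $\mu_t, \sigma_t$) data-dependent, but this is handled exactly as in~\cite{srinivas2009gaussian}: one conditions on the history so that $\mu_t(\bpi)$ and $\sigma_t(\bpi)$ become deterministic, applies the Gaussian tail bound pointwise, and then takes expectation — the resulting probability bound does not depend on the realized queries, so it survives the union bound.

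\textbf{Gaussian tail bound and union bound.} The standard bound $\mathbb{P}(|Z| > c) \leq e^{-c^2/2}$ for $Z \sim \mathcal{N}(0,1)$ and $c > 0$ gives, for each fixed $\bpi$ and $t$,
\begin{equation*}
	\mathbb{P}\Big( |u(\bpi) - \mu_t(\bpi)| > \sqrt{b_t}\,\sigma_t(\bpi) \Big) \leq e^{-b_t/2}.
\end{equation*}
With the choice $b_t = 2\log\!\big(\tfrac{n m \pi^2 t^2}{6\delta}\big)$ we get $e^{-b_t/2} = \tfrac{6\delta}{n m \pi^2 t^2}$. Taking a union bound over all $n m = |\Pi|$ strategy profiles and over all rounds $t \geq 1$, and using $\sum_{t \geq 1} t^{-2} = \pi^2/6$, yields
\begin{equation*}
	\mathbb{P}\Big( \exists \bpi \in \Pi,\ \exists t \geq 1 : |u(\bpi) - \mu_t(\bpi)| > \sqrt{b_t}\,\sigma_t(\bpi) \Big) \leq \sum_{t \geq 1} n m \cdot \frac{6\delta}{n m \pi^2 t^2} = \frac{6\delta}{\pi^2}\sum_{t\geq 1}\frac{1}{t^2} = \delta.
\end{equation*}
Passing to the complement gives the claimed inequality with probability at least $1 - \delta$ (in fact strictly greater, since the tail bound is not tight).

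\textbf{What to cite versus what to reprove.} Since this is stated as a restatement of~\cite[Lemma~5.1]{srinivas2009gaussian} in our notation, the cleanest route is to note that the cited lemma is proved for precisely this posterior/noise model and that our setting (finite $\Pi$ of size $nm$, Gaussian observation noise of variance $\lambda$, zero-mean GP prior) is an instance of it, with $|D|$ there replaced by $nm$ here. I would therefore keep the proof short: recall the conditional-normality fact, invoke the Gaussian tail inequality, and carry out the two-layer union bound as above, remarking that the data-dependence of the queries is immaterial because the per-step bound is uniform over histories. No additional regularity assumption is needed here — this lemma is purely about the finite collection of profiles, and Assumption~\ref{ass:smooth} only enters later when passing to infinite strategy spaces.
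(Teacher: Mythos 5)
Your proof is correct and matches the intended argument: the paper does not reprove this statement but simply imports it as Lemma~5.1 of~\cite{srinivas2009gaussian}, and the proof of that lemma is exactly what you reproduce — conditional Gaussianity of $u(\bpi)$ given the history (which neutralizes the data-dependence of the queried profiles), the tail bound $\mathbb{P}(|Z|>c)\leq e^{-c^2/2}$, and a union bound over the $nm$ profiles and all rounds with the choice $e^{-b_t/2}=\frac{6\delta}{nm\pi^2 t^2}$ and $\sum_{t\geq 1}t^{-2}=\pi^2/6$. So you have correctly rederived the cited result rather than taken a different route; nothing is missing.
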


The proof of Theorem~\ref{thm:fixed_confidence} also needs the following ancillary result, which, assuming a prior distribution {GP}$(\mathbf{0}, k(\bpi,\bpi'))$ over $u$, provides an upper bound on the variance of the posterior distribution at round $t$.

\begin{lemma}\label{lem:variance}
	For every strategy profile $\bpi \in \Pi$, the variance $\sigma_t^2(\bpi)$ is upper bounded as follows:
	\begin{equation}
	\sigma^2_t(\bpi) \leq \frac{\lambda}{\frac{\lambda}{\sigma^2} + N_t(\bpi)},
	\end{equation}
	where $N_t(\bpi)$ denotes the number of rounds in which $\bpi$ has been selected by the algorithm, up to round $t$.
\end{lemma}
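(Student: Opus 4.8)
The plan is to track the posterior variance of a \emph{single} profile $\bpi$ as a one-dimensional sequence in the round index $t$. Two facts drive the argument: $\sigma_t^2(\bpi)$ is nonincreasing in $t$, and it contracts in a controlled, closed-form way precisely on the rounds where $\bpi$ itself is queried. Iterating that contraction $N_t(\bpi)$ times starting from the prior variance $\sigma^2$ yields the claimed bound.

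\textbf{Key steps.} First, I would establish monotonicity: from the recursive update~\eqref{eq:gp_posterior_3_rec}, $\sigma_t^2(\bpi) = \sigma_{t-1}^2(\bpi) - k_{t-1}^2(\bpi,\bpi_t)/(\lambda + \sigma_{t-1}^2(\bpi_t))$, and since the subtracted term is a ratio of a square and a strictly positive quantity ($\lambda > 0$, $\sigma_{t-1}^2(\bpi_t) \ge 0$), we get $\sigma_t^2(\bpi) \le \sigma_{t-1}^2(\bpi)$ for every $t$. Second, on a round $t$ with $\bpi_t = \bpi$, using $k_{t-1}(\bpi,\bpi) = \sigma_{t-1}^2(\bpi)$ from~\eqref{eq:gp_posterior_3}, the update simplifies to
\begin{equation*}
	\sigma_t^2(\bpi) = \sigma_{t-1}^2(\bpi) - \frac{\sigma_{t-1}^4(\bpi)}{\lambda + \sigma_{t-1}^2(\bpi)} = \frac{\lambda\,\sigma_{t-1}^2(\bpi)}{\lambda + \sigma_{t-1}^2(\bpi)} = f\big(\sigma_{t-1}^2(\bpi)\big),
\end{equation*}
where $f(z) \coloneqq \lambda z/(\lambda + z)$ is increasing on $[0,\infty)$. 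Combining this with the monotonicity between consecutive queries of $\bpi$, an induction over the queries of $\bpi$ shows that after $\bpi$ has been selected $j$ times, $\sigma_t^2(\bpi) \le b_j$, where $b_0 \coloneqq \sigma^2$ and $b_{j} \coloneqq f(b_{j-1})$: indeed, if the bound holds after $j-1$ queries, weak monotonicity keeps the variance $\le b_{j-1}$ up to the round just before the $j$-th query, and then applying the increasing map $f$ gives $\le f(b_{j-1}) = b_j$.

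\textbf{Solving the recursion and concluding.} Passing to reciprocals, $1/b_j = (\lambda + b_{j-1})/(\lambda b_{j-1}) = 1/b_{j-1} + 1/\lambda$, so $1/b_j = 1/\sigma^2 + j/\lambda$, i.e. $b_j = \lambda/(\lambda/\sigma^2 + j)$. Taking $j = N_t(\bpi)$ and using that $\sigma_t^2(\bpi)$ is nonincreasing (so it is bounded by its value right after the last query of $\bpi$), we obtain $\sigma_t^2(\bpi) \le \lambda/(\lambda/\sigma^2 + N_t(\bpi))$, as desired. The only point requiring care — and the ``hard part'', such as it is — is the bookkeeping in the induction: the variance of $\bpi$ may also shrink on rounds where \emph{other} profiles are queried, so one must argue with the inequality $\sigma_t^2(\bpi)\le b_j$ rather than equality, which is exactly why monotonicity of both $f$ and the sequence $\sigma_t^2(\bpi)$ is invoked. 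Everything else is a routine computation.
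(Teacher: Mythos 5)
Your proof is correct, and it reaches the bound by a somewhat different mechanism than the paper. The paper's argument discards all observations taken at profiles other than $\bpi$: it invokes the facts that the posterior variance is nonincreasing in the number of observations and independent of their order, so $\sigma_t^2(\bpi)$ is upper bounded by the variance one would obtain from only the $N_t(\bpi)$ queries at $\bpi$ itself, and that latter quantity is computed in closed form via the Sherman--Morrison formula as $\lambda/(\lambda/\sigma^2 + N_t(\bpi))$. You instead work entirely from the recursive updates \eqref{eq:gp_posterior_1_rec}--\eqref{eq:gp_posterior_3_rec}: you derive the nonincreasing property directly from the update, identify the exact one-step contraction $z \mapsto \lambda z/(\lambda+z)$ on rounds where $\bpi$ is queried, and run an induction in which monotonicity of that map absorbs the extra shrinkage caused by queries at other profiles, finally solving the recursion in reciprocals. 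The two proofs buy different things: the paper's is shorter but treats the subset-of-observations monotonicity and the closed form for repeated observations of a single point as known GP facts, whereas yours is more self-contained, using only formulas already stated in the preliminaries, at the cost of the explicit induction bookkeeping you correctly flag (arguing with $\sigma_t^2(\bpi)\le b_j$ rather than equality). Both yield exactly the claimed bound, and your inequality chain has no gaps.
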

\begin{proof}
	Since the estimated variance $\sigma_t^2(\bpi)$ is nonincreasing in $t$~\cite{williams2006gaussian} and the order in which the algorithm selects the strategy profiles does not influence the estimates, we can upper bound $\sigma_t^2(\bpi)$ with an estimate obtained using a specific subset of the  observations.
	Here, we only consider those rounds in which $\bpi$ has been selected.
	This way, thanks to the Sherman-Morrison inversion formula~\cite{petersen2008matrix}, we have a closed-form expression for the estimated variance after the strategy profile $\bpi$ has been selected $N_t(\bpi)$ times.
	Thus,
	\begin{equation}
	\sigma_t^2(\bpi) \leq \sigma^2(N_t(\bpi)) = \frac{\lambda}{\frac{\lambda}{\sigma^2} + N_t(\bpi)}.
	\end{equation}
\end{proof}

\thmconfidence*

\begin{proof}
	We adapt the proof for the M-LUCB algorithm provided in~\cite{garivier2016maximin}, in which a similar result has been derived for a setting where the utility values are drawn from distributions with finite support and these distributions are independent (in contrast, in our setting, the distributions do not have finite support and they are not independent, since we assume that the utility function is drawn from a GP).

	Let $E$ be the event in which all the real utility values $u(\bpi)$, for each $\bpi \in \Pi$, are contained in the intervals $[L_t(\bpi), U_t(\bpi)]$ for every round $t \in \mathbb{N}$.
	Next, we prove the correctness of the algorithm under the assumption that the event $E$ holds.

	Since $E$ holds, for every strategy of the first player $x \in \X$, the second player's best response utility $u(x, y^\ast(x))$ cannot be below the lowest among the lower bounds $L_t(x, y)$, for $y \in \Y$, for every round $t$.
	Consequently, this is also true for round $T_\delta$.
	At round $t = T_\delta$, given how the stopping rule is defined, we have that the confidence interval of the recommended strategy profile $\bar{\bpi} = (\bar{x}_t, c_t(\bar{x}_t))$ is disjoint with all the intervals of strategy profiles in $\{ (x, c_t(x)) \}_{x \neq \bar{x}_t }$.
	Therefore,
	\begin{align}
		\overline{u} \coloneqq u(\bar{\bpi}) & \geq \mu_t(\bar{\bpi}) - \sqrt{b_t} \sigma_t(\bar{\bpi}) > \max_{x \in \mathcal{X} \setminus \{ \bar{x}_t \}} \left( \mu_t(x, y^\ast(x)) + \sqrt{b_t} \sigma_t(x, y^\ast(x)) \right) \nonumber\\
	& \geq \max_{x \in \mathcal{X} \setminus \{ \bar{x}_t \}} u(x, y^\ast(x)) \coloneqq \underline{u},
	\end{align}
	where we used the fact that all the utilities are contained in the confidence intervals, since $E$ holds.

	Now, we focus on the $\delta$- PAC properties of the algorithm.
	Since Lemma~10 in~\cite{garivier2016maximin} is still valid, we have that, given $c \in [\underline{u}, \overline{u}]$ and $t < T_\delta$, if $E$ holds, then there exists a strategy profile $\bpi=(x,y) \in \{\bpi_{t+1}, \bpi_{t+2}\}$ such that:
	\begin{equation}\label{eq:lemma_10}
		\mu_t(\bpi) - \sqrt{b_t} \sigma_t(\bpi) \leq c \leq \mu_t(\bpi) + \sqrt{b_t} \sigma_t(\bpi).
	\end{equation}
	Let us consider the case in which $\bpi \neq (x, y^\ast(x))$.
	We have that the utility $u(x,y^\ast(x))$ of a best response $y^\ast(x)$ to $x$ is contained in the confidence interval for $\bpi$.
	Indeed, since the algorithm selected the strategy profile $\bpi$ (either at round $t+1$ or at round $t+2$), it holds:
	%
	%
	\begin{equation}\label{eq:ineq_1}
		\mu_t(\bpi) - \sqrt{b_t} \sigma_t(\bpi) \leq \mu_t(x,y^\ast(x)) - \sqrt{b_t} \sigma_t(x,y^\ast(x)) \leq u(x,y^\ast(x)).
	\end{equation}
	Moreover, since the event $E$ holds we have that:
	\begin{equation}\label{eq:ineq_2}
		\mu_t(\bpi) + \sqrt{b_t} \sigma_t(\bpi) \geq u(\bpi)\geq u(x,y^\ast(x)).
	\end{equation}
	Equations~\eqref{eq:lemma_10},~\eqref{eq:ineq_1},~and~\eqref{eq:ineq_2} imply that $c$ and $u(x,y^\ast(x))$ are no further than $2 \sqrt{b_t} \sigma_t(\bpi)$, the length of the interval. 
	Formally:
	\begin{equation} \label{eq:diameter}
		|c - u(x,y^\ast(x))| \leq 2 \sqrt{b_t} \sigma_t(\bpi).
	\end{equation}
	%
	%
	%
	%
	%
	Recalling that $N_t(\bpi)$ denotes the number of rounds in which $\bpi$ has been selected by the algorithm up to round $t$, Lemma~\ref{lem:variance} tells us that $\sigma_t^2(\bpi) \leq \sigma^2(N_t(\bpi))$, which, together with Equation~\eqref{eq:diameter}, allows us to show the following:
	\begin{equation}
		|c - u(x,y^\ast(x))| \leq 2 \sqrt{ \frac{\lambda b_t}{\frac{\lambda}{\sigma^2} + N_t(\bpi)}}.
	\end{equation}
	Solving for $N_t(\bpi)$ we get:
	\begin{equation}\label{eq:ineq_enne_1}
		N_t(\bpi) \leq \frac{4 b_t \lambda}{(c - u(x,y^\ast(x)))^2} - \frac{\lambda}{\sigma^2}.
	\end{equation}
	Given that $E$ holds, we also have that $u(x,y^\ast(x))$ is in the confidence interval for $\bpi$, and, thus, its distance from $u(\bpi)$ is smaller than $2 \sqrt{b_t} \sigma_t(\bpi)$.
	Formally:
	\begin{equation} \label{eq:diameter2}
		|u(\bpi) - u(x,y^\ast(x))| \leq 2 \sqrt{b_t} \sigma_t(\bpi).
	\end{equation}
	Using similar arguments as before we have that:
	\begin{equation}\label{eq:ineq_enne_2}
		N_t(\bpi) \leq \frac{4 b_t \lambda}{(u(\bpi) - u(x,y^\ast(x)))^2} - \frac{\lambda}{\sigma^2}.
	\end{equation}
	Finally, since both Equation~\eqref{eq:ineq_enne_1} and Equation~\eqref{eq:ineq_enne_2} should hold at the same time, we have that:
	\begin{equation}
		N_t(\bpi) \leq 4 b_t \lambda \frac{1}{\max \Big\{ (u(\bpi) - u(x,y^\ast(x)))^2, (c - u(x,y^\ast(x)))^2 \Big\} } - \frac{\lambda}{\sigma^2}.
	\end{equation}
	
	The case in which $\bpi \equiv (x,y^\ast(x))$ can be treated analogously, since all the results on $c \in [\underline{u}, \overline{u}]$ are still valid, while the condition in Equation~\eqref{eq:diameter2} is always satisfied.
	Therefore, in this case we have:
	\begin{equation}
		N_t(x,y^\ast(x)) \leq \frac{4 b_t \lambda}{(c - u(x,y^\ast(x)))^2} - \frac{\lambda}{\sigma^2}.
	\end{equation}
	%
	
	Now, let us choose $c \coloneqq \frac{u(x^\ast, y^\ast) + u(x^\ast, y^{\ast \ast})}{2}$, where $x^\ast$ denotes a first player's maximin strategy and, for notational convenience, $x^{\ast\ast} \in \argmin_{x \in \X \setminus \{x^\ast\}} u(x, y^\ast(x))$ is a first player's maximin strategy when $x^\ast$ is removed.
	Then, let us define $H^*(u) := \sum_{\bpi \in \Pi} c(\bpi)$, where, for every strategy profile $\bpi = (x, y) \in \Pi$, $c(\bpi)$ is defined as follows:
	\begin{equation}\label{eq:c_epsilon}
	c(\bpi) \coloneqq \frac{1}{\max \left\{(u(\bpi) - u(x,y^\ast(x)))^2, \left(\frac{u(x^\ast, y^\ast(x^\ast)) + u(x^{\ast\ast}, y^\ast(x^{\ast\ast}))}{2} - u(x,y^\ast(x)) \right)^2 \right\} }.
	\end{equation}
	Given $t \in \mathbb{N}$ such that $t > 8 H^*(u) b_t \lambda - \frac{\lambda nm}{\sigma^2}$ we have that:
	\begin{align}
		\min \{T_\delta, t\} & = \sum_{h = 1}^{t} \mathbbm{1} \{h < T_\delta \} = 2 \sum_{h \in 2 \mathbb{N}, h < t} \mathbbm{1} \{h < T_\delta \} \\
		&  \leq 2 \sum_{h \in 2 \mathbb{N}, h < t} \mathbbm{1} \left\{\exists \bpi \in \Pi \text{ s.t. } N_h(\bpi) \leq 4 b_h \lambda c(\bpi) - \frac{\lambda}{\sigma^2} \right\}\\
		& \leq 2 \sum_{h \in 2 \mathbb{N}, h < t} \,\, \sum_{\bpi \in \Pi} \mathbbm{1} \bigg\{ (\bar{x}_h, \bar{y}_h) = \bpi \vee (\bar{x}_{h+1}, \bar{y}_{h+1}) = \bpi \bigg\} \mathbbm{1} \bigg\{ N_h(\bpi) \leq 4 b_t \lambda c(\bpi) - \frac{\lambda}{\sigma^2} \bigg\}\\
		& \leq 4 \sum_{\bpi \in \Pi} \left( b_t \lambda c(x_1, x_2) - \frac{\lambda}{\sigma^2} \right)\\
		& = 8 H^*(u) b_t \lambda - \frac{\lambda nm}{\sigma^2} < t,
	\end{align}
	where we used the fact that the algorithm takes a decision only on even rounds, and, thus, $T_\delta$ is even, and that $b_t$ is nondecreasing in $t$, which implies that $b_t \geq b_h$ for each $h < t$.
	The previous inequality implies that $T_\delta < t$, which ensures that the algorithm returns a strategy after at most $t$ rounds.
	Formally:
	\begin{equation}
		T_\delta \leq \inf \left\{t \in \mathbb{N} \ : \ 8 H^*(u) b_t \lambda - \frac{\lambda n m}{\sigma^2} < t \right\}.
	\end{equation}
	
	In conclusion, Lemma~\ref{lem:delta} provides an explicit formula for the upper bound on the number of rounds $T_\delta$, in the case in which we require confidence at least $1 - \delta$ that the event $E$ occurs.
	Using the prescribed $b_t \coloneqq  2 \log \left( \frac{nm \pi^2 t^2}{6 \delta} \right)$ we have:
	\begin{equation}
		\inf \left\{t \in \mathbb{N} \ : \ 16 H^*(u) \lambda \log \left( \frac{n m \pi^2 t^2}{6 \delta} \right) - \frac{\lambda n m}{\sigma^2} < t \right\} \leq \inf \left\{t \in \mathbb{N} \ : \ 32  H^*(u) \lambda \log \left( \frac{n m \pi^2 t^2}{6 \delta} \right) < t \right\}.
	\end{equation}
	Applying Lemma~12 in~\cite{kaufmann2016complexity} with $\alpha = 2$, $c_1 = \frac{1}{32 H^*(u) \lambda}$, $c_2 =  \frac{n m \pi^2}{6 \delta}$, and $x = t$ we get:
	\begin{equation}
	\tau_\delta \leq 64 H^*(u) \lambda \left( \log \left( 64 H^*(u) \lambda \pi \sqrt{\frac{n m}{6 \delta}} \right) + 2 \log \left( \log \left( 64 H^*(u) \lambda \pi \sqrt{\frac{n m}{6 \delta}} \right) \right) \right),
	\end{equation}
	which holds provided that $64 \lambda \pi \sqrt{\frac{n m}{6 \delta}} > 4.85$.
\end{proof}

\section{Omitted Proofs for the Fixed-Budget Setting}\label{sec:appendix_fixed_budget}

We provide the complete proof of Theorem~\ref{thm:fixed_budget}, which shows that the GP-SE algorithm (Algorithm~\ref{alg:gp_se}) with $T$ rounds available returns a maximin profile with probability at least $1 - \delta_T$.

\thmse*

\begin{proof}
	Let us focus on the probability that the algorithm excludes a maximin profile $\bpi^\ast = (x^\ast, y^\ast(x^\ast))$ during a specific phase $p \in \{1, \ldots P - 1\}$.
	There exist two different sources of error, which occur in a disjoint way.
	In the first case, the algorithm has already discarded all the strategy profiles in $\{ (x^\ast, y)  \}_{y \in \Y : (x^\ast, y) \neq \bpi^\ast }$, while, in the second one, there are some strategy profiles left in such set.
	Let us define the following events:
	\begin{align}
	E_p &= \bigg\{ {\Pi}_{p-1} \cap \{ \bpi^\ast \} = \{ \bpi^\ast \} \bigg\} \cap \bigg\{ {\Pi}_{p} \cap \{ \bpi^\ast \} = \emptyset \bigg\}, \\
	F &= \bigg\{ \forall y \neq y^\ast(x^\ast) \in \Y \quad {\Pi}_{p-1} \cap\{(x^\ast, y)\} = \emptyset \bigg\}.
	\end{align}
	
	Then, the probability that the algorithm makes an error during the learning process is:
	\begin{align}
	&\mathbb{P}(E) \coloneqq \sum_{k=1}^{{P - 1}} \mathbb{P}(E_p) =  \sum_{p=1}^{{P - 1}}  \mathbb{P}(E_p  \cap F) + \mathbb{P}(E_p \cap F^C) = \sum_{p=1}^{{P - 1}} \mathbb{P}(E_p \cap F^C) + \sum_{p = m-1}^{{P - 1}} \mathbb{P}(E_p  \cap F)\\
	& = \sum_{p=1}^{{P - 1}} \mathbb{P} \left(\exists y \neq y^\ast(x^\ast) \in \Y \text{ s.t. } \min_{y \in \Y} \mu_p(x^\ast, y) \leq \min_{\bpi \in \Pi_p} \mu_p(\bpi) \ \wedge \forall y \in \Y \ \mu_p(\bpi^\ast) \geq \mu_p(x^\ast, y) \right) + \nonumber \\
	& \quad+ \sum_{p = m-1}^{{P - 1}} \mathbb{P} \bigg( \forall x \in \X \setminus \{x^\ast \}, \forall y \in \Y \ \mu_p(\bpi^\ast) \leq \mu_p(x,y) \bigg)\\
	& \leq \sum_{p=1}^{{P - 1}} \mathbb{P} \bigg( \forall y \in \Y \ \mu_p(\bpi^\ast) \geq \mu_p(x^\ast, y) \bigg) + \sum_{p = m-1}^{{P - 1}} \mathbb{P} \bigg( \forall x \in \X \setminus \{x^\ast\} \ \mu_p(\bpi^\ast) \leq \mu_p(x, y^\ast(x)) \bigg)\\
	& \leq \sum_{p=1}^{{P - 1}} \sum_{y \in \Y \setminus \{y^\ast(x^\ast)\}} \mathbb{P} \bigg( \mu_p(\bpi^\ast) \geq \mu_p(x^\ast, y) \bigg) + \sum_{p = m-1}^{{P - 1}} \sum_{ x \in \X \setminus \{x^\ast\}} \mathbb{P} \bigg(\mu_p(\bpi^\ast) \leq \mu_p(x,y^\ast(x)) \bigg)\\
	& = \sum_{p=1}^{{P - 1}} \sum_{y \in \Y \setminus \{y^\ast(x^\ast)\}} \mathbb{P} \bigg( \mu_p(\bpi^\ast) - u(\bpi^\ast) - \frac{\Delta(x^\ast, y)}{2} - \mu_p(x^\ast, y) + u(x^\ast, y) - \frac{\Delta(x^\ast, y)}{2} \geq 0 \bigg) + \nonumber\\
	& \quad+ \sum_{p = m-1}^{P - 1} \sum_{x \in \X \setminus \{x^\ast\}} \mathbb{P} \bigg(\mu_p(\bpi^\ast) - u(\bpi^\ast) + \frac{\Delta(x,y^\ast(x))}{2} - \mu_p(x,y^\ast(x)) + u(x,y^\ast(x)) + \frac{\Delta(x,y^\ast(x))}{2} \leq 0 \bigg)  \label{eq:delta}\\
	& \leq \sum_{p=1}^{{P - 1}} \sum_{y \in \Y \setminus \{y^\ast(x^\ast)\}} \left( e^{- \frac{ \Delta^2(x^\ast, y)}{8 \sigma^2_p(\bpi^\ast)}} + e^{- \frac{\Delta^2(x^\ast, y)}{8 \sigma^2_p(x^\ast,y)}} \right) + \sum_{p = m-1}^{P - 1} \sum_{x \in \X \setminus \{x^\ast\}} \left( e^{-\frac{\Delta^2(x, y^\ast(x))}{8 \sigma^2_p(\bpi^\ast)}} + e^{-\frac{\Delta^2(x, y^\ast(x))}{8\sigma^2_p(x, y^\ast(x))}} \right)\\
	& \leq \sum_{p=1}^{{P - 1}} \sum_{y \in \Y \setminus \{y^\ast(x^\ast)\}} 2 e^{-\frac{ \left( \frac{\lambda}{\sigma^2} + T_p \right) \Delta^2(x^\ast, y)}{8 \lambda}} + \sum_{p = m-1}^{P - 1} \sum_{x \in \X \setminus \{x^\ast\}} 2 e^{-\frac{ \left( \frac{\lambda}{\sigma^2} + T_p \right) \Delta^2(x, y^\ast(x))}{8 \lambda}}\\
	& = 2 \left( \sum_{p=1}^{{P - 1}} \sum_{y \in \Y \setminus \{y^\ast(x^\ast)\}} e^{-\frac{T_p \Delta^2(x^\ast, y)}{8 \lambda}} + \sum_{p = m-1}^{P - 1} \sum_{x \in \X \setminus \{x^\ast\}} e^{-\frac{T_p \Delta^2(x, y^\ast(x))}{8 \lambda}}\right),
	\end{align}
	where in Equation~\eqref{eq:delta} we defined $\Delta(x^\ast, y) \coloneqq u(x^\ast, y) - u(\bpi^\ast)$ and $\Delta(x,y^\ast(x)) \coloneqq u(\bpi^\ast) - u(x,y^\ast(x))$.
	Note that the second summation is from $p = m - 1$ since, even in the worst case in which the algorithm first dismisses only strategy profiles $(x^\ast,y)$ for $y \neq y^\ast(x^\ast) \in \Y$, during the phases $p \in \{ 1, \ldots, m-1 \}$, there is still at least one profile $(x^\ast, y) \neq \bpi^\ast$ in $\Pi_p$.
	
	Using a phase length strategy as defined in~\cite{audibert2010best}, \emph{i.e.}, $T_p = \left\lceil \frac{T - P}{\overline{\log}(P) (P  + 1 - p)} \right\rceil$, where $\overline{\log}(P) = \frac{1}{2} + \sum_{i=2}^P \frac{1}{i}$, we obtain the following:
	\begin{align}
	\mathbb{P}(E) &\leq 2 \left( \sum_{p=1}^{{P - 1}} (m - 1) e^{-\frac{T-P}{8 \lambda \overline{\log}(P) H}} + \sum_{p = m-1}^{P - 1} (n - 1) e^{-\frac{T-P}{8 \lambda \overline{\log}(P) H}} \right)\\
	& \leq 2 P (n + m - 2) e^{-\frac{T-P}{8 \lambda \overline{\log}(P) H_2}},
	\end{align}
	where $H_2 \coloneqq \max_i i \Delta^{-2}_{(i)}$ and $\Delta_{(i)}$ is the $i$-th difference between a generic strategy profile $\bpi_i$ and the maximin one $\bpi^\ast = (x^\ast, y^\ast(x^\ast))$, or, formally, $\Delta_{(i)} \coloneqq |u(\bpi^*) - u(\bpi_i)|$ with $\Delta_{(1)} \leq \Delta_{(2)} \leq \ldots \leq \Delta_{(P)}$.
	
	This concludes the proof.
\end{proof}

\section{Omitted Proofs for Simulation-Based Games with Infinite Strategy Spaces}\label{sec:appendix_continuous}

We provide a complete proof of Theorem~\ref{thm:cont} and Corollaries~\ref{cor:cont_conf}~and~\ref{cor:cont_budget}.

\thmcont*

\begin{proof}
	Under Assumption~\ref{ass:smooth}, it is possible to show that the following holds:
	\begin{align}
		& |u(\bpi)- u(x',y)| \leq L |x - x'| \quad \forall \bpi = (x,y) \in \Pi, \forall x' \in \X \label{eq:smoothx}\\
		& |u(\bpi)- u(x,y')| \leq L |y - y'| \quad \forall \bpi = (x,y) \in \Pi, \forall y' \in \Y \label{eq:smoothy}
	\end{align}
	with probability $1 - 2ae^{-\frac{L^2}{b^2}}$, for any $L > 0$.
	
	We divide the analysis in two cases: depending on whether the maximin utility value $u(\bpi^\ast)$ for the original game with infinite strategy spaces is larger than the one for the finite discretized game, or not. 
	First, we focus on the former case.
	We show how to bound the difference $u(\bpi^\ast) - u(\overline{\bpi}) > 0$.
	Note that, with a discretization of $\Pi$ made of at least $h = \left\lceil \frac{b}{2\epsilon} \sqrt{\log \left( \frac{4a}{\delta}\right)} \right\rceil$ intervals per dimension, there exists at least one first player's strategy in the discretized game, namely $x_g$, such that the distance between a maximin strategy $x^\ast$ in the original game and the closest strategy on the grid $x_g$ is less than or equal to $\frac{h}{2}$.
	Notice that $y^\ast(x_g)$, \emph{i.e.}, a second player's best response in the original game to $x_g$, differs at most of $\frac{h}{2}$ from $y_g^\ast(x_g)$, \emph{i.e}., a second player's best response  to $x_g$ in the discretized game.
	Therefore, we have that:
	\begin{align}
	u(\bpi^\ast) - u(\overline{\bpi}) & = u(\bpi^\ast) - u(x^\ast,y_g^\ast(x_g)) + u(x^\ast,y_g^\ast(x_g)) - u(x_g,y_g^\ast(x_g)) + u(x_g,y_g^\ast(x_g)) - u(\overline{\bpi})\\
	& = \underbrace{\min_{y \in \Y} u(x^\ast,y) - u(x^\ast,y_g^\ast(x_g))}_{\leq 0} + \underbrace{u(x^\ast,y_g^\ast(x_g)) - u(x_g,y_g^\ast(x_g))}_{L |x^\ast - x_g|} + \underbrace{u(x_g,y_g^\ast(x_g)) - u(\overline{\bpi})}_{< 0}\\
	& \leq \frac{L}{2h},
	\end{align}
	where we used Equation~\eqref{eq:smoothx} and the fact that $u(\overline{\bpi})$ is larger than the utility value of any other best response of the second player and, specifically, also larger than $u(x_g,y_g^\ast(x_g))$.
	
	Now, let us consider the second case, \emph{i.e.}, the maximin utility value $u(\bpi^\ast)$ for the original game is smaller than the one  for the finite discretized game.
	We provide a bound on the difference $u(\overline{\bpi}) - u(\bpi^\ast) > 0$.
	Let us assume that $x^\ast$ is on the grid.
	Then, the distance of $\bpi^\ast$ from a second player's best response in the discretized game should be less than $\frac{1}{2h}$.
	Therefore, its value cannot be lower than $u(x_g,y_g^\ast(x_g)) - \frac{L}{2h}$ for every $x_g$ on the grid.
	From its definition we have that:
	\begin{align}
		& u(\bpi^\ast) > \max_{x_g} \left( u(x_g,y_g^\ast(x_g)) - \frac{L}{2h} \right) = \max_{x_g} u(x_g,y_g^\ast(x_g)) - \frac{L}{2h} = u(\overline{\bpi}) - \frac{L}{2h},
	\end{align}
	which implies that $u(\overline{\bpi}) - u(\bpi^\ast) \leq \frac{L}{2h}$.
	Since the maximum over a larger set cannot decrease, we have that this result holds also in the case $x^*$ is not on the grid.
	Overall we have that $|u(\bpi^\ast) - u(\overline{\bpi})| \leq \frac{L}{2h}$.

	Choosing $L = b \sqrt{\log \left( \frac{4a}{\delta}\right)}$, we have that:
	\begin{align}
	|u(\bpi^\ast) - u(\overline{\bpi})| \leq \frac{b \sqrt{\log \left( \frac{4a}{\delta}\right)}}{2 \left\lceil \frac{b}{2\epsilon} \sqrt{\log \left( \frac{4a}{\delta}\right)} \right\rceil} \leq \epsilon,
	\end{align}
	which holds with probability at least $1 - \frac{\delta}{2}$.
\end{proof}

\corconfidence*

\begin{proof}
	Using Theorem~\ref{thm:fixed_confidence} with $b_t := 2 \log \left( \frac{n m \pi^2 t^2}{3 \delta} \right)$ and Applying Lemma 12 in~\cite{kaufmann2016complexity} with $\alpha = 2$ $c_1 = \frac{1}{32 H^*(u) \lambda}$, $c_2 = \frac{n m \pi^2}{3 \delta}$ and $x = t$ we have that after at most:
	\begin{equation}
	T_\delta \leq 64 H^*(u) \lambda \left( \log \left( 64 H^*(u) \lambda \pi \sqrt{\frac{n m}{3 \delta}} \right) + 2 \log \left( \log \left( 64 H^*(u) \lambda \pi \sqrt{\frac{n m}{3 \delta}} \right) \right) \right)
	\end{equation}
	the M-GP-LUCB algorithm returns a maximin strategy profile for the finite discretized game with probability at least $1 - \frac{\delta}{2}$.
	Then, using Theorem~\ref{thm:cont}, we have an error with respect to a maximin profile of the original game of at most $\epsilon$, with a probability of at least $1 - \frac{\delta}{2}$.
	Overall, using a union bound, the statement of the theorem holds with probability at least $1 - \delta$.
	The number of rounds, with the chosen discretization, becomes:
	\begin{equation}
	\tau_\delta \leq 64 H^*(u) \lambda \left( \log \left( 64 H^*(u) \lambda \pi K_\epsilon \sqrt{\frac{1}{3 \delta}} \right) + 2 \log \left( \log \left( 64 H^*(u) \lambda \pi K_\epsilon \sqrt{\frac{1}{3 \delta}} \right) \right) \right),
	\end{equation}
	which concludes the proof.
\end{proof}

\corse*

\begin{proof}
	The proof is obtained by setting $\delta' = 2 \delta$ in Theorem~\ref{thm:cont}, which assures that the error with respect to a maximin strategy profile computed over $\Pi$ is at most $\epsilon$ with a probability at least $1 - \delta'$, and using Theorem~\ref{thm:fixed_budget} with $K = K^2_\epsilon$.
	As a result, we get:
	\begin{align}
		K_\epsilon - 1 = \left\lceil \frac{b}{2\epsilon} \sqrt{\log \left( \frac{2a}{\delta'}\right)} \right\rceil \implies \delta' \leq 2 a e^{-\frac{b^2}{4 \epsilon^2 (K_\epsilon - 1)^2}}.
	\end{align}
\end{proof}

\thmsupercazzola*

\begin{proof}
	The proof follows from the proof of Theorem~\ref{thm:cont}, by substituting $d_x^{\max}$ as the maximum distance between one of the available strategies of the first players and $x^* \in \mathcal{X}$ (instead of $\frac{1}{2h}$) and $d_y^{\max}$ as the maximum distance between one of the strategies of the second players and $y^* \in \mathcal{X}$ (instead of $\frac{1}{2h}$)
	
	Specifically, letting $x_g$ be the closest strategy in $\mathcal{X}_n$ to the maximin strategy $x^*$, if $u(\bpi^\ast) - u(\overline{\bpi}) > 0$, we have:
	$$u(\bpi^\ast) - u(\overline{\bpi}) \leq L |x^* - x_g | \leq L d_x^{\max}.$$
	
	Conversely, if $u(\bpi^\ast) - u(\overline{\bpi}) < 0$, and $x^* \in \mathcal{X}_n$ we have:
	\begin{equation}
		u(\bpi^\ast) > \max_{x_g} \left( u(x_g,y_g^\ast(x_g)) - L |y_g^*(x_g) - y^*| \right) \geq \max_{x_g} u(x_g,y_g^\ast(x_g)) - L d_y^{\max}  = u(\overline{\bpi}) - L d_y^{\max},
	\end{equation}
	and for the properties of the maximum we have that the result also holds if $x^* \not\in \mathcal{X}_n$.
	
	Finally, using the fact that $-L d_y^{\max} \leq u(\bpi^\ast) - u(\overline{\bpi}) \leq L d_x^{\max}$, we have that:
	\begin{equation}
		|u(\bpi^\ast) - u(\overline{\bpi})| \leq L \max \left\{d_x^{\max}, d_y^{\max}\right\} = b \sqrt{\log \left( \frac{4a}{\delta} \right)} \max \left\{d_x^{\max}, d_y^{\max}\right\},
	\end{equation}
	where we used the definition of $L$, and we choose that Assumption~\ref{ass:smooth} holds with probability at least $1 - \frac{\delta}{2}$.
	This concludes the proof.
\end{proof}

%

\section{Full Description of the \emph{Hit-the-Spitfire} Security Game}\label{sec_ppendix_missile}

In this section, we provide a complete description of the \emph{Hit-the-Spitfire} security game introduced in Section~\ref{sec:experiments} (see also Figure~\ref{fig:gameinstance}).
In such game, the first player is a terrestrial counter-air defensive unit that has to hit an approaching enemy airplane with a surface-to-air heat-seeking missile.

The defender becomes aware of the airplane presence when it is $h_\perp$ meters far away.
Then, immediately after detection time, the counter-air unit has to launch a missile with an angle $\theta \in [0,1]$ (in radians) with respect to the direction along which the airplane has been detected.
The maximum angle is determined by the finite amount of fuel in the missile.

We assume, for simplicity, that the airplane is moving along the direction perpendicular to the line passing through the defensive unit and the point where the plane is detected, with a constant speed $v_d$.
The missile moves along the straight line determined by $\theta$ with a constant speed $v_a > v_d$.

The airplane can deploy a flare with the intent of deflecting the missile, and it has to choose where to do it, expressed in terms of distance $s \in [0,s_{\max}]$ measured starting from the point where the airplane has been detected.
Notice that $s_{\max}$ represents the maximum distance from the detection point at which the missile can hit the plane, given the amount of fuel available.
Formally,
$$s_{\max} = \frac{v_d h}{v_a \cos(1)}.$$
The flare acts at a distance of $h_f$ meters from the airplane.
We denote with $\ell$ the length of the airplane (in meters), while the flare covers an area large $\frac{\ell}{2}$ meters around the point where it is deployed.

When the missile hits the airplane, the latter suffers a damage $d \in [0,1]$, which represents the defender's utility $u$ (or, equivalently, the opposite of the airplane utility).
The damage is maximal if the airplane is hit at its middle point, while it decreases if the hitting point moves towards the extremes.
Formally:
\begin{equation*}
	d(x) = - \frac{4}{\ell^2} x^2 + 1,
\end{equation*}
where $x$ is the hitting point (in meters) starting from the tail of the plane.

For simplicity, we equivalently express the second player's strategy $s$ as $s_d = \frac{s}{s_{\max}}$, so that $s_d \in [0,1]$.

The game outcome is determined as follows.
If the missile trajectory does not intercept the flare and intercepts the airplane shape, then the missile hits the plane, dealing damage..
If the missile intercepts the flare, then it is deflected away from the airplane with some probability.
Specifically, it still hits the airplane with a probability proportional to the angle between the flare and the missile.
For the sake of simplicity, we model this process as a Bernoulli with mean $1 - \frac{|s_d s_{\max} - x_d|}{s_{\max}}$, where $x_d$ is the position of the airplane (measured starting from the detection point) when the flare was hit.
In the case of a hit, the hitting point is uniformly chosen over the plane as $U \left( [-\frac{\ell}{2},\frac{\ell}{2}] \right)$.

The overall process of computing the utility $u$ is depicted in Algorithm~\ref{alg:spit}.

\paragraph{Intercepting the Flare}
The missile intercepts the flare after the time necessary to travel the vertical distance $y_a = h_\perp - h_f$.
%
Since its vertical speed is $v_a \cos(\theta)$, such time is $t = \frac{y_a}{v_a \cos(\theta)} = \frac{h_\perp - h_f}{v_a \cos(\theta)}$.
The space covered in the horizontal direction by the missile is now $x_a = (h - h_f) \tan{\theta}$.
Therefore, we have that the missile hits the flare when:
\begin{align}
&s_d s_{\max} - \frac{\ell}{4} \leq x_a \leq s_d s_{\max} + \frac{\ell}{4},\\
&s_d s_{\max} - \frac{\ell}{4} \leq (h_\perp - h_f) \tan{\theta} \leq s_d s_{\max} + \frac{\ell}{4}.
\end{align}

\paragraph{Intercepting the Airplane}
Assuming the flare did not intercept the missile, the latter hits the airplane after the time necessary to travel the vertical distance $y_a = h$.
Since its vertical speed is $v_a \cos(\theta)$, such time is $t = \frac{y_a}{v_a \cos(\theta)} = \frac{h}{v_a \cos(\theta)}$.
Thus, te missile intercepts the plane if:
\begin{align}
& x_d - \frac{\ell}{2} \leq x_a \leq x_d + \frac{\ell}{2},\\
& v_d \frac{h_\perp}{v_a \cos(\theta)} - \frac{\ell}{2} \leq h_\perp \tan(\theta) \leq v_d \frac{h_\perp}{v_a \cos(\theta)} + \frac{\ell}{2}.
\end{align}

\begin{algorithm}[t!]
	\caption{Interaction of the \emph{Hit-the-Spitfire} Game}
	\label{alg:spit}
	\begin{algorithmic}
		\State The first player (defender) selects $\theta \in [0, 1]$
		\State The second player (airplane) selects $s_d \in [0, 1]$ (or, equivalently, $s \in [0,s_{\max}]$)
		\If {$s_d s_{\max} - \frac{\ell}{4} \leq (h_\perp - h_f) \tan{\theta} \leq s_d s_{\max} + \frac{\ell}{4}$}
		\State Generate a sample $p$ from $Be \left( 1 - \frac{|s_d s_{\max} - x_d|}{s_{\max}} \right)$
		\If{$p = 1$}
		\State Generate a sample $q$ from a uniform distribution $U([-\frac{\ell}{2}, \frac{\ell}{2}] )$
		\State Deal damage $d = d(q)$
		\Else
		\State Deal damage $d = 0$ 
		\EndIf
		\Else
		\If {$v_d \frac{h_\perp}{v_a \cos(\theta)} - \frac{\ell}{2} \leq h_\perp \tan(\theta) \leq v_d \frac{h_\perp}{v_a \cos(\theta)} + \frac{\ell}{2}$}
		\State Deal damage $d = d \left( \left| \frac{v_d}{v_a \cos(\theta)} - h_\perp \tan(\theta) \right| \right)$
		\Else
		\State Deal damage $d = 0$ 
		\EndIf
		\EndIf
	\end{algorithmic}
\end{algorithm}

%

\section{Additional Experimental Results} \label{sec:add_res}

\begin{figure}[h!]
	\subfloat[$\epsilon$ vs. $\hat{\epsilon}$ for different values of $K_\epsilon$ (Mat\'ern kernel with $\nu = 1.5$).]{
\begin{tikzpicture}

\definecolor{color0}{rgb}{0.75,0.75,0}

\begin{axis}[
legend cell align={left},
legend style={at={(0.80,0.97)}, anchor=north west, draw=white!80.0!black},
tick align=outside,
tick pos=both,
x grid style={lightgray!92.02614379084967!black},
xlabel={$K_\epsilon$},
xmin=0, xmax=55,
xtick style={color=black},
y grid style={lightgray!92.02614379084967!black},
ymin=0.09, ymax=5.8,
ytick style={color=black},
ymode=log
]

\addplot [semithick, green!50.0!black, mark=square]
table {%
3	4.402105774
5	1.467368591
8	0.733684296
14	0.366842148
26	0.183421074
};
\addlegendentry{$\epsilon$}
\addplot [semithick, blue, mark=o]
table {%
3	2.028144278
5	0.150893888
8	0.150893888
14	0.150407506
26	0.018965974
};
\addlegendentry{$\hat{\epsilon}$}

\end{axis}

\end{tikzpicture}}
	\subfloat[$\epsilon$ vs. $\hat{\epsilon}$ for different values of $K_\epsilon$ (squared exponential kernel with $l=1.5$).]{\begin{tikzpicture}

\definecolor{color0}{rgb}{0.75,0.75,0}

\begin{axis}[
legend cell align={left},
legend style={at={(0.80,0.97)}, anchor=north west, draw=white!80.0!black},
tick align=outside,
tick pos=both,
x grid style={lightgray!92.02614379084967!black},
xlabel={$K_\epsilon$},
xmin=0, xmax=55,
xtick style={color=black},
y grid style={lightgray!92.02614379084967!black},
ymin=0.00007, ymax=5.8,
ytick style={color=black},
ymode=log
]

\addplot [semithick, green!50.0!black, mark=square]
table {%
3	4.773565692
5	1.591188564
8	0.795594282
14	0.397797141
};
\addlegendentry{$\epsilon$}
\addplot [semithick, blue, mark=o]
table {%
3	1.38855313
5	0.046471654
8	0.0001
14	0.0001
};
\addlegendentry{$\hat{\epsilon}$}

\end{axis}

\end{tikzpicture}}
\end{figure}

\end{document}